%
%
%
\documentclass[twoside,12pt,reqno]{amsart}
\usepackage{a4wide}
\usepackage{latexsym}
\usepackage{amsfonts}
\usepackage{amssymb}
\usepackage{amsthm}
\usepackage{amsmath}              
\usepackage{wasysym}              
\pdfpagewidth 8.5in
\pdfpageheight 11in


\newcommand{\cl}{C \kern -0.1em \ell}     

\newcommand{\JJ}{\mathbin{\raisebox{0.25ex}{$\footnotesize
                       \rm\vphantom{I}%
                       \_\hskip -0.25em\_%
                       \vrule width 0.6pt$}}}           


\newcommand{\w}{\wedge}

\newcommand{\Spin}{\mathbf{Spin}}            
\newcommand{\Pin}{\mathbf{Pin}}

\newcommand{\Mat}{{\rm Mat}}

  \newcommand{\be}{{\bf e}}

\newcommand{\e}{{\bf e}}


\newcommand{\Id}{{\bf 1}}


\newcommand{\BK}{\mathbb{K}}

\newcommand{\BC}{\mathbb{C}}
\newcommand{\BR}{\mathbb{R}}
\newcommand{\BH}{\mathbb{H}}
\newcommand{\BZ}{\mathbb{Z}}
\newcommand{\diag}{\mbox{\rm diag}}
\newcommand{\spn}{\mbox{\rm span}}

\hyphenation{quad-rat-ic}
\hyphenation{mul-ti-pli-ca-tion}
\hyphenation{or-thog-o-nal}

\def\dim{\hbox{\rm dim\,}}

\hyphenation{quad-rat-ic}
\hyphenation{mul-ti-pli-ca-tion}
\hyphenation{or-thog-o-nal}

\newcommand{\beq}{\begin{equation}}
\newcommand{\eeq}{\end{equation}}

\def\dim{\hbox{\rm dim\,}}

\def\CLIFFORD{\mbox{\bf \tt CLIFFORD}}

\newcommand{\ed}{\end{document}}
\setcounter{page}{1}     

\def\ve{\varepsilon}

\def\Sg{{\hat{S}}}
\def\fg{\hat{f}}
\def\psig{{\psi_g}}
\def\phig{{\phi_g}}
\def\chpsi{\check{\psi}}
\def\chphi{\check{\phi}}

\newcommand{\ta}[2]{#1_#2\tilde{\phantom{.}}}
\newcommand{\cb}[1]{\mathcal{#1}}

\newcommand{\Gpqe}[2]{G_{#1,#2}^\varepsilon}
\newcommand{\Gpq}[2]{G_{#1,#2}}
\newcommand{\Gpqf}[3]{G_{#1,#2}(#3)}

\newcommand{\Kpqf}[3]{K_{#1,#2}(#3)}
\newcommand{\Tpqf}[3]{T_{#1,#2}(#3)}
\newcommand{\Lip}[2]{\mathrm{\bf Lip}(#1,#2)}

\newcommand{\iu}{{\underline{i}}}

\newcommand{\fpower}[1]{{}^2 \kern -0.01em #1}

\newcommand{\chS}{\check{S}}
\newcommand{\chK}{\check{\BK}}
\renewcommand{\Id}{\mathrm{Id}}
\newcommand{\tr}{\mathrm{tr}}

\newcommand{\rexpansion}[3]{#1_1 #2_1 + \ldots + #1_#3 #2_#3}
\newcommand{\cexpansion}[3]{\overline{#1}_1 #2_1 + \ldots + \overline{#1}_#3 #2_#3}

\newcommand{\sympsiphi}[2]{\psi_{(#1}\phi_{#2)}}
\newcommand{\asympsiphi}[2]{\psi_{[#1}\phi_{#2]}}
\newcommand{\colonequal}{\mathrel{\mathop:}=}
\newcommand{\starc}[1]{#1^*}

\newcommand{\tp}{\ta{T}{\ve}}

\newcommand{\phm}{\phantom{-}}

\DeclareMathOperator{\hotimes}{\Hat{\otimes}}

\theoremstyle{plain}
\newtheorem{theorem}{Theorem}

\newtheorem{corollary}{Corollary}
\newtheorem{lemma}{Lemma}
\newtheorem{proposition}{Proposition}
\theoremstyle{definition}
\newtheorem{definition}{Definition}
\newtheorem{example}{Example}
\newtheorem*{remark}{Remark}


\begin{document}
\title[On the transposition anti-involution III]{On the Transposition Anti-Involution in\\
       Real Clifford Algebras III:\\
       The Automorphism Group of the Transposition\\
       Scalar Product on Spinor Spaces}

\author{Rafa\l \ Ab\l amowicz}
\email{rablamowicz@tntech.edu}
\address{%
Department of Mathematics, Box 5054\\
Tennessee Technological University\\
Cookeville, TN 38505, USA}
\author{Bertfried Fauser}
\email{B.Fauser@cs.bham.ac.uk}
\address{%
School of Computer Science\\
The University of Birmingham\\
Edgbaston-Birmingham, W. Midlands, B15 2TT\\
United Kingdom}

\begin{abstract} 
  A signature $\ve=(p,q)$ dependent transposition anti-involution $\tp$ of
  real Clifford algebras $\cl_{p,q}$ for non-degenerate quadratic forms
  was introduced in~\cite{part1}. In~\cite{part2} we showed that, depending
  on the value of $(p-q) \bmod 8$, the map $\tp$ gives rise to transposition,
  complex Hermitian, or quaternionic Hermitian conjugation of representation
  matrices in spinor representation. The resulting scalar product is in general
  different from the two known standard scalar products~\cite{lounesto}.
  We provide a full signature $(p,q)$ dependent classification of the
  invariance groups $\Gpqe{p}{q}$ of this product for $p+q\leq 9.$ The map~$\tp$ 
  is identified as the ``star" map known~\cite{passman} from the theory of (twisted) 
  group algebras, where the Clifford algebra $\cl_{p,q}$ is seen as a twisted group
  ring $\BR^t[(\BZ_2)^n],$ $n=p+q.$ We discuss and list important subgroups of
  stabilizer groups $\Gpqf{p}{q}{f}$ and their transversals in relation to
  generators of spinor spaces. 
\end{abstract}

\keywords{
grade involution, group ring, indecomposable module, involution,
minimal left ideal, monomial order, primitive idempotent, 
reversion, semisimple algebra, spinor, stabilizer, transversal, 
twisted group ring, universal Clifford algebra}

\subjclass{15A66, 16S35, 20C05, 68W30}

\maketitle
\medskip\section{Introduction}
\label{intro}
This article concludes the developments of~\cite{part1,part2}. Hence, all
notation and definitions are the same. Recall that in addition to reversion,
grade involution and conjugation, any universal real Clifford algebra
$\cl(V,Q)$ of a non-degenerate quadratic real vector space possesses a
transposition anti-involution $\tp$ as a unique extension of a certain
orthogonal map $t_\varepsilon:V \rightarrow V^\flat$. Using the identification
$V^\flat \cong V^\ast,$ the orthogonal map $t_\varepsilon$ was shown to be a
symmetric non-degenerate correlation on~$V$. For the properties of 
$t_\varepsilon$ see~\cite{part1}.

Let $\cl_{p,q}$ be a simple Clifford algebra ($p-q \neq 1 \bmod 4$) and let 
$S = \cl_{p,q}f$ be a left spinor ideal generated by a primitive idempotent
$f$. As usual, we let $\BK = f\cl_{p,q}f.$ Then, $\tp$ allows one to
define~\cite[Prop. 3]{part2} a $\BK$-valued \emph{transposition spinor scalar 
product} $S \times S \rightarrow \BK$ as\footnote{%
Recall that elements $\lambda$ of $\BK=f\cl_{p,q}f$ commute with $f$ because
$\lambda = fuf$ for some $u \in \cl_{p,q}.$ The notion $\exists !$ stands for
`there exists a unique'.}
\begin{gather}
  \forall \psi,\phi \in S, \quad \exists ! \, \lambda \in \BK,\quad\textrm{s.t.}\quad
  (\psi,\phi) \mapsto \tp(\psi)\phi = \lambda f = f\lambda .
\label{eq:norm}
\end{gather}
The group $\Gpqe{p}{q}$ was defined in~\cite{part2} as the invariance group
of this transposition scalar product:
\begin{gather}
  \Gpqe{p}{q} = \{ g \in \cl_{p,q} \, | \, \tp(g)g =1\}.
\label{eq:Gpq}
\end{gather}
In particular, the product is invariant under two of its subgroups:
Salingaros'~\cite{salingaros1,salingaros2,salingaros3} finite vee
group $\Gpq{p}{q} < \Gpqe{p}{q}$ and the stabilizer group
$\Gpqf{p}{q}{f} \lhd \Gpq{p}{q}$ of a primitive idempotent $f$ under
the conjugate action of $\Gpq{p}{q}$ on $\cl_{p,q}.$ Since the stabilizer
group $\Gpqf{p}{q}{f}$ plays an important role in constructing and
understanding spinor representation of Clifford algebras, we classified
these groups in~\cite{part2}. 

In this article, in Section~\ref{subgroups}, we discuss important subgroups
of $\Gpqf{p}{q}{f}$ which are related to the idempotent $f$ and to the
division ring $\BK.$ Furthermore, we show how transversals of these subgroups
in the vee group $\Gpq{p}{q}$ are in fact generators of the (left) spinor
spaces ${}_{\cl}S_{\BK}$ viewed as real vector spaces or as right
modules over $\BK$.

Then, in Section~\ref{tpscalarproduct}, we undertake a systematic study of
the transposition scalar product $\tp(\psi)\phi$ and its automorphism group
$\Gpqe{p}{q}$ for all signatures $(p,q)$, $p+q \leq 9,$ in simple and
semisimple Clifford algebras. 

In Section~\ref{groupalgebra} we conclude that our transposition
anti-involution $\tp$  of $\cl_{p,q}$ is related to a certain
anti-involution $*:\BR^t[(\BZ_2)^n] \rightarrow \BR^t[(\BZ_2)^n]$ when we
identify the Clifford algebra $\cl_{p,q}$ with a \textit{twisted group ring}
$\BR^t[(\BZ_2)^n]$ of an abelian group
$(\BZ_2)^n = \BZ_2 \times \cdots \times \BZ_2$, $n$~times where $n=p+q,$ in
the sense of~\cite{albuquerquemajid}. 

Finally, in Section~\ref{conc}, we summarize the results obtained in all
three papers and provide some ideas how they can be extended and generalized.
 
\medskip\section{Subgroups of the stabilizer group}
\label{subgroups}

Let $\cl_{p,q}$ be a real universal Clifford algebra with Grassmann basis
$\cb{B}$ sorted by the admissible order \texttt{InvLex}~\cite{part1}. We
begin by recalling the definition of Salingaros' vee
group~\cite{salingaros1}.
\begin{definition}
A \textit{vee group} $\Gpq{p}{q}$ is defined as the set $\Gpq{p}{q} = 
\{ \pm m \mid m \in \cb{B} \}$ in $\cl_{p,q}$ together with the Clifford
product as the group binary operation. 
\label{def:veegroup}
\end{definition}
\noindent
This finite group is of order $|\Gpq{p}{q}| = 2^{1+p+q}$ and its commutator
subgroup  $\Gpq{p}{q}' = \{\pm 1\}.$

We know that any polynomial $f$ in $\cl_{p,q}$ expressible as a product of
idempotents $\frac12(1\pm \be_{\iu_s})$ namely
\begin{equation}
  f = \frac12(1\pm \be_{\iu_1})\frac12(1\pm \be_{\iu_2})\cdots
      \frac12(1\pm \be_{\iu_k}),
\label{eq:f}
\end{equation}
where $\be_{\iu_1},\ldots, \be_{\iu_k},$ $k = q - r_{q-p}$,\footnote{%
Here, $r_i$ is Radon-Hurwitz number defined by recursion as $r_{i+8}=r_i+4$
and these initial values: $r_0=0,r_1=1,r_2=r_3=2,r_4=r_5=r_6=r_7=3
$~\cite{hahn,lounesto}.}
are commuting basis monomials in $\cb{B}$ with square $1$, is a primitive
idempotent in $\cl_{p,q}.$\footnote{%
In the following, for $f$ we can pick any of the $2^k$ primitive idempotents.
In examples, our default choice for $f$ will always be the one in which all
signs in the factorization~(\ref{eq:f}) are plus.}
Furthermore, $\cl_{p,q}$ has a complete set of $2^k$ such primitive mutually
annihilating idempotents which add up to the identity $1$. Moreover, the basis
monomials which define~$f$ generate a group  $\langle \be_{\iu_1},\ldots,
\be_{\iu_k}  \rangle \cong (\BZ_2)^k$, see \cite{part2} and references therein.

With any primitive idempotent $f$ we associate three groups:\\
(i) The \textit{stabilizer $\Gpqf{p}{q}{f}$ of $f$} defined as
\begin{gather}
\Gpqf{p}{q}{f} = \{ m \in \Gpq{p}{q} \mid m f m^{-1} = f \} < 
\Gpq{p}{q}.
\end{gather}
The stabilizer $\Gpqf{p}{q}{f}$ is a normal subgroup of $\Gpq{p}{q}$. These
groups were classified, depending on the signature $(p,q),$ in \cite{part2}.
In particular, recall that
\begin{gather}
 |\Gpqf{p}{q}{f}| = \begin{cases} 2^{1+p+r_{q-p}}, & p - q \neq 1 \bmod 4;\\
                                  2^{2+p+r_{q-p}}, & p - q = 1 \bmod 4.
                    \end{cases}
\label{eq:orderGpqf}
\end{gather}
(ii) A new abelian \emph{idempotent group} $\Tpqf{p}{q}{f}$ \textit{of $f$}, a subgroup of 
$\Gpqf{p}{q}{f}$, generated by the
commuting basis monomials  $\be_{\iu_1},\ldots, \be_{\iu_k}$, and their
negatives, which appear in the factorization~(\ref{eq:f}) of~$f$.\\
(iii) A \emph{field group} $\Kpqf{p}{q}{f}$ \textit{of $f$}, a subgroup of 
$\Gpqf{p}{q}{f}$, related to the (skew double) field $\BK \cong f\cl_{p,q}f$.
\begin{definition} Let $f$ be a primitive idempotent in $\cl_{p,q}$ defined
as in~(\ref{eq:f}). Then, the \textit{idempotent group} $\Tpqf{p}{q}{f}$ 
\textit{of $f$} is defined as
\begin{gather}
\Tpqf{p}{q}{f} = \langle \pm 1, \be_{\iu_1},\ldots, \be_{\iu_k} \rangle < \Gpqf{p}{q}{f},
\label{def:Tpqf}
\end{gather}
where $k= q - r_{q-p}.$
\end{definition}

Thus, immediately from this definition we obtain the following
\begin{lemma}
$\Tpqf{p}{q}{f}$ has the following structure:
\begin{gather}
  \Tpqf{p}{q}{f} \cong \Gpq{p}{q}' \times 
                   \langle \be_{\iu_1},\ldots, \be_{\iu_k} \rangle \cong
  \Gpq{p}{q}' \times (\BZ_2)^k \quad \mbox{and} \quad |\Tpqf{p}{q}{f}|=2^{1+k}
\end{gather}
where $k = q - r_{q-p}.$
\label{lemma1}
\end{lemma}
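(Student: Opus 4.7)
The plan is to read off the structure essentially from the definition by checking three things: centrality of $-1$, the orders of the generators, and that the only relation that could collapse the product is a possible overlap between $\langle -1\rangle$ and $\langle \be_{\iu_1},\ldots,\be_{\iu_k}\rangle$.

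First I would note what is already in hand. The scalar $-1$ lies in the centre of $\cl_{p,q}$ and satisfies $(-1)^2 = 1$, so $\langle -1\rangle \cong \BZ_2$; moreover the commutator subgroup of the vee group is $\Gpq{p}{q}' = \{\pm 1\}$, which is recalled immediately after Definition~\ref{def:veegroup}. On the other hand, the generators $\be_{\iu_1},\ldots,\be_{\iu_k}$ were chosen in~(\ref{eq:f}) to be pairwise commuting basis monomials with $\be_{\iu_s}^2 = 1$, and the excerpt already states that they generate a group isomorphic to $(\BZ_2)^k$. Together these give the abelian factor on the right-hand side of the desired isomorphism.

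Next I would verify that $\langle -1\rangle \cap \langle \be_{\iu_1},\ldots,\be_{\iu_k}\rangle = \{1\}$. This is the one step that needs a brief argument. Every element of $\langle \be_{\iu_1},\ldots,\be_{\iu_k}\rangle$ has the form $\be_{\iu_1}^{a_1} \cdots \be_{\iu_k}^{a_k}$ with $a_s \in \{0,1\}$, and because the factors commute and square to $1$ no sign is produced in the reduction; hence every such element is a \emph{positively signed} basis monomial in $\cb{B}$, while $-1 \in \Gpq{p}{q}$ is not. Equivalently, one can appeal to the fact that the $2^k$ products index the $2^k$ distinct primitive idempotents in the complete orthogonal set containing $f$, so none of them can equal $-1$.

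With this in place, the result follows by the recognition theorem for internal direct products: both $\langle -1\rangle$ and $\langle \be_{\iu_1},\ldots,\be_{\iu_k}\rangle$ are normal in $\Tpqf{p}{q}{f}$ (in fact $-1$ is central and the second factor is abelian with $-1$ centralising it), their intersection is trivial, and their product generates $\Tpqf{p}{q}{f}$ by the very definition~(\ref{def:Tpqf}). Therefore $\Tpqf{p}{q}{f} \cong \Gpq{p}{q}' \times \langle \be_{\iu_1},\ldots,\be_{\iu_k}\rangle \cong \BZ_2 \times (\BZ_2)^k$, and the order count $|\Tpqf{p}{q}{f}| = 2 \cdot 2^k = 2^{1+k}$ is immediate. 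I do not expect a serious obstacle here: the only non-formal input is the trivial-intersection check, which reduces to the basic observation that elements of $\cb{B}$ carry positive sign by convention.
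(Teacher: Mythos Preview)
Your proof is correct and follows essentially the same approach as the paper: both invoke the internal direct product recognition theorem by checking that $\Gpq{p}{q}'=\{\pm1\}$ and $\langle \be_{\iu_1},\ldots,\be_{\iu_k}\rangle$ are normal subgroups of $\Tpqf{p}{q}{f}$ with trivial intersection whose product is all of $\Tpqf{p}{q}{f}$. The paper's proof simply asserts the trivial intersection, whereas you supply the sign argument explicitly; otherwise the two are identical.
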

\begin{proof}
This is straightforward since $\Gpq{p}{q}'$ and 
$\langle \be_{\iu_1},\ldots, \be_{\iu_k} \rangle$ are normal subgroups of 
$\Tpqf{p}{q}{f}$ with trivial intersection, $\Tpqf{p}{q}{f} =
\langle \be_{\iu_1},\ldots, \be_{\iu_k} \rangle \Gpq{p}{q}',$ and
$k = q - r_{q-p}$ as in~(\ref{eq:f}).
\end{proof}
Our third group, the field group $\Kpqf{p}{q}{f}$, is related to the division
ring $\BK=f\cl_{p,q}f$. Recall that for simple Clifford algebras, left spinor
ideals $S=\cl_{p,q}f$ are right $\BK$-modules where $\BK \simeq \BR,$ 
(respectively, $\BK \simeq \BC,$ or $\BK \simeq \BH,$) when
$p-q = 0,1,2 \bmod 8$ (respectively, $p-q = 3,7 \bmod 8$, or 
$p-q = 4,5,6 \bmod 8$)~\cite{part2}. 
\begin{definition}
Let $f$ be a primitive idempotent in $\cl_{p,q}$ defined as in~(\ref{eq:f})
and let $\BK=f\cl_{p,q}f$. Let $\cb{K}$ be a set of Grassmann monomials
in~$\cb{B}$ which span~$\BK$ as a real algebra, that is,
$\BK = \spn_\BR \{m \mid m \in \cb{K}\}$. Then, the 
\textit{field group of~$f$} is defined as
\begin{gather}
   \Kpqf{p}{q}{f} =  \langle \pm 1, m \mid m \in \cb{K}\rangle < \Gpqf{p}{q}{f}.
   \label{def:Kpqf}
\end{gather}
\label{def3}
\end{definition}

\noindent
Therefore, immediately from the definition, we get that
\begin{gather}
|\Kpqf{p}{q}{f}| = \begin{cases} 2, & p - q = 0,1,2 \bmod 8;\\
                                 4, & p - q = 3,7 \bmod 8;\\ 
                                 8, & p - q = 4,5,6 \bmod 8.
                   \end{cases}
\label{eq:orderKpqf}
\end{gather}
\noindent
Before we state our main theorem which summarizes relations between the groups 
$\Gpq{p}{q}$, $\Gpqf{p}{q}{f}$, $\Tpqf{p}{q}{f}$, $\Kpqf{p}{q}{f}$, and
$\Gpq{p}{q}'$, we recall the definition of a \textit{(left)
transversal}~\cite[Sect. 5.6]{rotman}.
\begin{definition}
Let $K$ be a subgroup of a group $G$. A \textit{transversal} $\ell$ of $K$
in~$G$ is a subset of~$G$ consisting of exactly one element $\ell(bK)$ from
every (left) coset $bK$, and with $\ell(K)=1$. 
\end{definition}
A transversal of a normal subgroup $K \lhd G$ is the same as the image of a
\textit{lifting} function $\ell:Q \rightarrow G$ in the \textit{extension} of
$K$ by $Q$~\cite[Sect. 10.2]{rotman} (using Rotman's notation $\ell$ for the
lifting too).
\begin{theorem}[Main Theorem]
Let $f$ be a primitive idempotent in a simple or semisimple Clifford algebra
$\cl_{p,q}$ and let $\Gpq{p}{q}$, $\Gpqf{p}{q}{f}$, $\Tpqf{p}{q}{f}$,
$\Kpqf{p}{q}{f}$, and $\Gpq{p}{q}'$ be the groups defined above. Furthermore,
let $S=\cl_{p,q}f$ and $\BK=f\cl_{p,q}f$.
\begin{itemize}
\item[(i)] Elements of $\Tpqf{p}{q}{f}$ and $\Kpqf{p}{q}{f}$ commute.
\item[(ii)] $\Tpqf{p}{q}{f} \cap \Kpqf{p}{q}{f} = \Gpq{p}{q}' = \{\pm 1 \}$.
\item[(iii)] $\Gpqf{p}{q}{f} = \Tpqf{p}{q}{f}\Kpqf{p}{q}{f} = 
\Kpqf{p}{q}{f}\Tpqf{p}{q}{f}$.
\item[(iv)]  $|\Gpqf{p}{q}{f}| = |\Tpqf{p}{q}{f}\Kpqf{p}{q}{f}| =
\frac12 |\Tpqf{p}{q}{f}||\Kpqf{p}{q}{f}|$. 
\item[(v)] $\Gpqf{p}{q}{f} \lhd \Gpq{p}{q}$, $\Tpqf{p}{q}{f} \lhd \Gpq{p}{q}$,
and $\Kpqf{p}{q}{f} \lhd \Gpq{p}{q}$. In particular, $\Tpqf{p}{q}{f}$ and
$\Kpqf{p}{q}{f}$ are normal subgroups of $\Gpqf{p}{q}{f}$.
\item[(vi)] We have:
\begin{align}
  \Gpqf{p}{q}{f} /\Kpqf{p}{q}{f} &\cong \Tpqf{p}{q}{f} /\Gpq{p}{q}',\\
  \Gpqf{p}{q}{f} /\Tpqf{p}{q}{f} &\cong \Kpqf{p}{q}{f} /\Gpq{p}{q}'.
  \label{eq:conj6}
\end{align}
\item[(vii)] We have:
\begin{gather}
  (\Gpqf{p}{q}{f}/\Gpq{p}{q}')/(\Tpqf{p}{q}{f}/\Gpq{p}{q}')
   \cong \Gpqf{p}{q}{f}/\Tpqf{p}{q}{f} \cong \Kpqf{p}{q}{f}/\{\pm 1 \} 
  \label{eq:conj7}
\end{gather}
and the transversal of $\Tpqf{p}{q}{f}$ in $\Gpqf{p}{q}{f}$ spans $\BK$
over $\BR$ modulo~$f$.
\item[(viii)] The transversal of $\Gpqf{p}{q}{f}$ in $\Gpq{p}{q}$ spans $S$
over $\BK$ modulo~$f$.  
\item[(ix)] We have $(\Gpqf{p}{q}{f}/\Tpqf{p}{q}{f}) \lhd
(\Gpq{p}{q}/\Tpqf{p}{q}{f})$ and 
\begin{gather}
  (\Gpq{p}{q}/\Tpqf{p}{q}{f})/(\Gpqf{p}{q}{f}/\Tpqf{p}{q}{f}) 
  \cong \Gpq{p}{q}/\Gpqf{p}{q}{f}
  \label{eq:conj8}
\end{gather}
and the transversal of $\Tpqf{p}{q}{f}$ in $\Gpq{p}{q}$ spans $S$  over 
$\BR$ modulo~$f$.
\item[(x)] The stabilizer $\Gpqf{p}{q}{f}$ can be viewed as 
\begin{gather}
  \Gpqf{p}{q}{f} = \bigcap_{x \in \Tpqf{p}{q}{f}} C_{\Gpq{p}{q}}(x) 
                 = C_{\Gpq{p}{q}}(\Tpqf{p}{q}{f})
\end{gather}
where $C_{\Gpq{p}{q}}(x)$ is the centralizer of $x$ in $\Gpq{p}{q}$ and
$C_{\Gpq{p}{q}}(\Tpqf{p}{q}{f})$ is the centralizer of $\Tpqf{p}{q}{f}$ in
$\Gpq{p}{q}$.
\end{itemize}
\label{maintheorem}
\end{theorem}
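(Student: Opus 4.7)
The overall strategy is to derive everything from one pivotal observation: for any $m \in \Gpq{p}{q}$ and basis monomial $\be_\iu$, $m\be_\iu m^{-1} = \pm\be_\iu$, so if $m \in \Gpqf{p}{q}{f}$ then conjugation sends each factor $\frac{1}{2}(1+\be_{\iu_j})$ of~\eqref{eq:f} to $\frac{1}{2}(1\pm\be_{\iu_j})$, and the condition $mfm^{-1}=f$ forces the $+$ sign for every $j$. Thus every element of $\Gpqf{p}{q}{f}$ commutes with every generator of $\Tpqf{p}{q}{f}$, which is exactly~(i) (since $\Kpqf{p}{q}{f}\subseteq\Gpqf{p}{q}{f}$) and also~(x): $\Gpqf{p}{q}{f}\subseteq C_{\Gpq{p}{q}}(\Tpqf{p}{q}{f})$, with the reverse inclusion immediate from the fact that $f$ lies in the $\BR$-span of $\Tpqf{p}{q}{f}$.

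For~(ii), elements of $\Tpqf{p}{q}{f}\setminus\Gpq{p}{q}'$ are, up to sign, products of commuting $\be_{\iu_j}$ of positive square and hence square to $+1$; meanwhile $m\mapsto fmf$ embeds $\Kpqf{p}{q}{f}$ in $\BK^\times$, so its non-central generators are imaginary units of order~$4$ whenever $\BK\in\{\BC,\BH\}$. Hence $\Tpqf{p}{q}{f}\cap\Kpqf{p}{q}{f}\subseteq\Gpq{p}{q}'$. Then~(i) makes $\Tpqf{p}{q}{f}\Kpqf{p}{q}{f}$ a subgroup of $\Gpqf{p}{q}{f}$, and the product formula $|AB|=|A||B|/|A\cap B|$, combined with Lemma~\ref{lemma1},~\eqref{eq:orderKpqf}, and~\eqref{eq:orderGpqf}, matches its order with $|\Gpqf{p}{q}{f}|$ case-by-case in $p-q\bmod 8$, proving~(iii); item~(iv) is an immediate restatement. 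Normality in~(v) is clear since $\Gpq{p}{q}$-conjugation sends any basis monomial to $\pm$ itself and thereby preserves each generating set, with $\Gpqf{p}{q}{f}\lhd\Gpq{p}{q}$ from~\cite{part2}. Items~(vi) and~(vii) are then applications of the second and third isomorphism theorems to the chain $\Gpq{p}{q}'\lhd\Tpqf{p}{q}{f},\Kpqf{p}{q}{f}\lhd\Gpqf{p}{q}{f}=\Tpqf{p}{q}{f}\Kpqf{p}{q}{f}$.

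For~(viii) and~(ix) we pair the subgroup counting with Wedderburn theory. For simple $\cl_{p,q}\cong\Mat_N(\BK)$, the identity $[\Gpq{p}{q}:\Gpqf{p}{q}{f}]=N=\dim_\BK S$ follows from~\eqref{eq:orderGpqf} by direct index computation. Given a transversal $\{g_1,\ldots,g_N\}$, a $\BK$-relation $\sum_i (g_if)\lambda_i=0$ becomes, after left multiplication by $fg_j^{-1}$, a sum of terms $fg_j^{-1}g_if\,\lambda_i$; the key auxiliary fact is that for a vee-group element $g\notin\Gpqf{p}{q}{f}$, the argument of~(i) gives anticommutation of $g$ with some $\be_{\iu_j}$, producing a factor $\frac{1}{2}(1+\be_{\iu_j})\cdot\frac{1}{2}(1-\be_{\iu_j})=0$ in $fgf$, so $fgf=0$. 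Only the $i=j$ term survives, forcing $\lambda_j=0$; matching counts promotes $\{g_if\}$ to a $\BK$-basis of $S$. Splicing this transversal with one of $\Tpqf{p}{q}{f}$ in $\Gpqf{p}{q}{f}$ (which spans $\BK$ over $\BR$ modulo $f$ by~(vii)) yields~(ix); the semisimple case is analogous using the second branch of~\eqref{eq:orderGpqf}. The main obstacle is the annihilation property $fgf=0$ for $g\notin\Gpqf{p}{q}{f}$ underlying~(viii); with this in hand, every remaining item is bookkeeping built on the centralizer characterization of the stabilizer and elementary group theory.
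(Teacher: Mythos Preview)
Your argument is correct and, for the core items (i)--(iv), (vi), and (x), essentially coincides with the paper's: both hinge on the observation that a vee-group element stabilizes $f$ if and only if it commutes with every $\be_{\iu_j}$, and both deduce (ii) from the fact that nontrivial elements of $\Tpqf{p}{q}{f}$ have order~$2$ while non-central elements of $\Kpqf{p}{q}{f}$ have order~$4$.

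The substantive difference is in (v), (vii), (viii), (ix). The paper \emph{does not prove} the normality of $\Tpqf{p}{q}{f}$ and $\Kpqf{p}{q}{f}$ in $\Gpq{p}{q}$ nor the three transversal-spanning statements; it simply reports that these were verified with \texttt{CLIFFORD} for $p+q\le 9$. Your approach replaces this computer verification with genuine arguments: normality is immediate because conjugation in $\Gpq{p}{q}$ sends each basis monomial to $\pm$~itself, and the spanning in (viii) follows from your annihilation lemma $fgf=0$ for $g\notin\Gpqf{p}{q}{f}$, which gives $\BK$-linear independence of the transversal and matches the count $[\Gpq{p}{q}:\Gpqf{p}{q}{f}]=\dim_\BK S$. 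This is a real improvement: it makes the theorem self-contained and valid in all dimensions rather than only those checked.

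One small gap to close: you invoke the spanning claim of (vii) when assembling (ix), but your treatment of (vii) only covers the isomorphisms. The missing step is short: since $\Gpqf{p}{q}{f}=\Tpqf{p}{q}{f}\Kpqf{p}{q}{f}$ by (iii), a transversal of $\Tpqf{p}{q}{f}$ in $\Gpqf{p}{q}{f}$ can be chosen inside $\Kpqf{p}{q}{f}$ (indeed as a transversal of $\{\pm1\}$ in $\Kpqf{p}{q}{f}$, namely $\cb{K}$ itself); any other transversal differs from this one by elements of $\Tpqf{p}{q}{f}$, which act on $f$ as $\pm1$ since each $\be_{\iu_j}$ satisfies $\be_{\iu_j}f=f$. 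Hence modulo~$f$ the transversal is, up to signs, exactly the $\BR$-basis of $\BK$.
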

\begin{proof} All of the following statements have been verified additionally by
explicit computations in \CLIFFORD\ for all signatures $(p,q)$, with $p+q\leq 9$.
\begin{itemize}
\item[(i)] Let $m \in \Kpqf{p}{q}{f}$ where $\Kpqf{p}{q}{f}$ is defined in~(\ref{def:Kpqf}). To show that $m$ commutes with every element in the group $\Tpqf{p}{q}{f}$ defined 
in~(\ref{def:Tpqf}), it is enough to show that $m \be_{\iu_j} = \be_{\iu_j}m$ for every 
$\be_{\iu_j} \in \cb{T}$  where $\cb{T}=\{\be_{\iu_1},\ldots, \be_{\iu_k}\}$ is the set of $k=q-r_{q-p}$ basis monomials which define any primitive idempotent $f$ as in~(\ref{eq:f}). Suppose, to the contrary, that there exists $\be_{\iu_j} \in \cb{T}$ which does not commute 
with~$m$. Thus, $\be_{\iu_j}$ and $m$ anti-commute, or $\be_{\iu_j}m=-m\be_{\iu_j}$. 
Let $P_j^{\pm} = \frac12(1 \pm \be_{\iu_j})$. Without any loss of generality, let us assume 
that $P_j^{+}$ is a factor of $f$ so that $f=fP_j^{+}=P_j^{+}f.$ On the other hand, $mf=fm$ since $m \in \Kpqf{p}{q}{f} \subset \BK$ and all elements of $\BK$ commute with $f$. Thus, $fm =fP_{j}^{+}m = fmP_{j}^{-}=m(fP_{j}^{-}) = 0$ since $P_{j}^{+}P_{j}^{-}=P_{j}^{-}P_{j}^{+}=0,$ or, since $m$ is invertible, $f=0$ which is false. Therefore, $m$ commutes with every element of $\Tpqf{p}{q}{f}$ which is generated by the elements of $\cb{T}$ and $\pm 1.$ 
\item[(ii)] Obviously, $\{\pm 1 \} \subset \Tpqf{p}{q}{f} \cap \Kpqf{p}{q}{f}.$ Let 
$m \in \Tpqf{p}{q}{f} \cap \Kpqf{p}{q}{f}.$ Then, $m^2=1$ since every element of 
$\Tpqf{p}{q}{f}$, except the identity, is of order~$2.$ Thus, since also $m \in \Kpqf{p}{q}{f},$ we conclude that $m=\pm 1,$ or, $\Tpqf{p}{q}{f} \cap \Kpqf{p}{q}{f} \subset \{\pm 1 \}.$
\item[(iii)] Obviously, $\Tpqf{p}{q}{f}, \Kpqf{p}{q}{f} < \Gpqf{p}{q}{f}$ because elements of 
$\Tpqf{p}{q}{f}$ and  $\Kpqf{p}{q}{f}$ commute with $f$, hence, they stabilize $f$. 
Therefore, we have $\Tpqf{p}{q}{f} \Kpqf{p}{q}{f} =$\\$\Kpqf{p}{q}{f}\Tpqf{p}{q}{f}
< \Gpqf{p}{q}{f}$ where the equality comes from~(i). To show that\\
$\Tpqf{p}{q}{f} \Kpqf{p}{q}{f}= \Gpqf{p}{q}{f}$, we use the following order argument: By 
the Product Formula~\cite[Prop. 2.73]{rotman}, we have that 
\begin{gather}
|\Tpqf{p}{q}{f} \Kpqf{p}{q}{f}|
=\frac{|\Tpqf{p}{q}{f}| |\Kpqf{p}{q}{f}|}{|\Tpqf{p}{q}{f} \cap \Kpqf{p}{q}{f}|} 
=2^k |\Kpqf{p}{q}{f}| = |\Gpqf{p}{q}{f}|
\end{gather}
thanks to Lemma~\ref{lemma1}; part (ii) of this Theorem; and the formula for 
$|\Gpqf{p}{q}{f}|$ given in~(\ref{eq:orderGpqf}). 
\item[(iv)] This follows from (ii), (iii), and the Product 
Formula~\cite[Prop. 2.73]{rotman}.
\item[(v)] The fact that $\Gpqf{p}{q}{f} \lhd \Gpq{p}{q}$ was proven
in~\cite{part2}. Then, $\Tpqf{p}{q}{f} \lhd \Gpq{p}{q}$ and
$\Kpqf{p}{q}{f} \lhd \Gpq{p}{q}$ has been verified with \texttt{CLIFFORD} in
Clifford algebras~$\cl_{p,q}$ for $p+q\le 9$. Of course, when 
$\Kpqf{p}{q}{f}=\{\pm 1\}$ then $\Tpqf{p}{q}{f}=\Gpqf{p}{q}{f} \lhd \Gpq{p}{q}$
and $\Kpqf{p}{q}{f}$ is trivially normal in $\Gpq{p}{q}.$ The rest follows
immediately from (i) and (iii).
\item[(vi)] This follows from (iii), (v), and the Second Isomorphism 
Theorem~\cite[Thm. 2.74]{rotman}.
\item[(vii)] The first isomorphism in~(\ref{eq:conj6}) follows from the fact
that $\Gpq{p}{q}' \lhd \Gpq{p}{q}$ and the Third Isomorphism
Theorem~\cite[Thm. 2.75]{rotman}. The second isomorphism
in~(\ref{eq:conj6}) follows from (ii), (iii), and the Second Isomorphism
Theorem~\cite[Thm. 2.74]{rotman}. The fact that the transversal of
$\Tpqf{p}{q}{f}$ in $\Gpqf{p}{q}{f}$ spans $\BK$ over $\BR$ modulo~$f$ has
been verified with \texttt{CLIFFORD} in Clifford algebras~$\cl_{p,q}$ for
$p+q\le 9$. 
\item[(viii)] The fact that the transversal of $\Gpqf{p}{q}{f}$ in $\Gpq{p}{q}$
spans $S$ over~$\BK$ modulo~$f$ has been verified with \texttt{CLIFFORD} in
Clifford algebras~$\cl_{p,q}$ for $p+q\le 9$. 
\item[(ix)] The isomorphism in~(\ref{eq:conj8}) follows from (iii) and the
Third Isomorphism Theorem. The remaining statement has been verified with
\texttt{CLIFFORD} in Clifford algebras~$\cl_{p,q}$ for $p+q\le 9$.
\item[(x)] This follows from basic group theory and definitions of
centralizers.
\end{itemize}
\end{proof}

\begin{corollary}
We have two normal series in $\Gpq{p}{q}$:
\begin{gather}
\Gpq{p}{q} \ge \Gpqf{p}{q}{f} \ge \Tpqf{p}{q}{f} \ge \Gpq{p}{q}' \ge \{1\},\\
\Gpq{p}{q} \ge \Gpqf{p}{q}{f} \ge \Kpqf{p}{q}{f} \ge \Gpq{p}{q}' \ge \{1\}.   
\end{gather}
\end{corollary}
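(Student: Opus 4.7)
My plan is to observe that this corollary is almost immediate once the Main Theorem is in hand, so the proof just needs to verify two things for each chain: (a) the successive inclusions, and (b) the normality of each subgroup in the ambient vee group $\Gpq{p}{q}$ (which is the Rotman convention for \emph{normal series} cited already in the paper).

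For the inclusions, I would argue as follows. The containment $\Gpqf{p}{q}{f} \le \Gpq{p}{q}$ is built into the definition of the stabilizer. Both $\Tpqf{p}{q}{f}$ and $\Kpqf{p}{q}{f}$ are contained in $\Gpqf{p}{q}{f}$ by Definitions~\ref{def:Tpqf} and~\ref{def3}, since their generators commute with $f$ and therefore stabilize it. Finally, $\Gpq{p}{q}' = \{\pm 1\}$ is contained in both $\Tpqf{p}{q}{f}$ and $\Kpqf{p}{q}{f}$ because $-1$ appears explicitly among the generators in each definition, and of course $\{1\} \le \Gpq{p}{q}'$.

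For the normality statements, the key inputs are already packaged in Theorem~\ref{maintheorem}(v), which asserts $\Gpqf{p}{q}{f}, \Tpqf{p}{q}{f}, \Kpqf{p}{q}{f} \lhd \Gpq{p}{q}$. The commutator subgroup $\Gpq{p}{q}'$ is normal in $\Gpq{p}{q}$ by a standard general fact, and here it equals $\{\pm 1\}$, which is obviously central. The trivial subgroup is always normal. Thus every term in each chain is normal in $\Gpq{p}{q}$, giving the two normal series in the stronger (global) sense, not merely the subnormal sense.

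There is no real obstacle here: the corollary is a direct packaging of parts of the Main Theorem together with the elementary observation that $\pm 1$ sits inside every one of the intermediate groups. If one preferred the weaker subnormal interpretation, the proof would be even shorter since it would only require $H_{i+1} \lhd H_i$ at each step, which follows from normality in the ambient group. I would present the proof as a short paragraph invoking Theorem~\ref{maintheorem}(v) and the explicit form of the generators, without introducing any new computation.
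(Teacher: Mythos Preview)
Your proposal is correct and matches the paper's approach: the corollary is stated immediately after the Main Theorem without proof, precisely because it follows directly from part~(v) together with the definitions of $\Tpqf{p}{q}{f}$ and $\Kpqf{p}{q}{f}$ (each containing $\pm 1$ among its generators) and the standard fact that the commutator subgroup is normal. Your write-up makes explicit exactly the ingredients the authors intended the reader to supply.
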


Recall that in \texttt{CLIFFORD} information about each Clifford algebra 
$\cl_{p,q}$ for $p+q\le 9$ is stored in a built-in data file. This information
can be retrieved in the form of a seven-element list with a command 
$\texttt{clidata([p,q])}$. For example, for $\cl_{3,0}$ we find:
\begin{gather}
  \mathtt{data} = [complex,
                   2,
		   simple,
		   \frac12 \Id+ \frac12 \be_1,
		   [\Id, \be_2, \be_3, \be_{23}],
		   [\Id, \be_{23}], [\Id, \be_{2}]]
  \label{eq:sampledata}
\end{gather}
where $\Id$ denotes the identity element of the algebra. In particular, from
the above we find that: (i) $\cl_{3,0}$ is a simple algebra isomorphic to
$\Mat(2,\BC)$; ($\mathtt{data[1]}$, $\mathtt{data[2]}$, $\mathtt{data[3]}$) (ii) 
Expression $\frac12 \Id+ \frac12 \be_1$ ($\mathtt{data[4]}$) is a primitive
idempotent $f$ which may be used to generate a spinor ideal $S=\cl_{3,0}f;$
(iii) The fifth entry $\mathtt{data[5]}$ provides, modulo~$f$, a real basis
for $S$, that is, $S = \spn_\BR \{ f, \be_2 f, \be_3 f, \be_{23} f \};$
(iv) The sixth entry $\mathtt{data[6]}$ provides, modulo~$f$, a real basis
for $\BK=f \cl_{3,0}f \cong \BC$, that is, $\BK = \spn_\BR \{f, \be_{23}f\}$;
and, (v) The seventh entry $\mathtt{data[7]}$ provides, modulo~$f$, a basis
for $S$ over $\BK$, that is, $S = \spn_\BK \{ f, \be_2 f\}.$\footnote{%
See \cite{ablamowicz1996,ablamowicz1998} how to use \texttt{CLIFFORD}.}

Thus, based on the above theorem, we have the following corollary:
\begin{corollary}
Let $\mathtt{data}$ be the list of data returned by the procedure 
$\mathtt{clidata}$ in $\mathtt{CLIFFORD}$. Then, $\mathtt{data[5]}$ is a
transversal of 
$\Tpqf{p}{q}{f}$ in $\Gpq{p}{q}$; $\mathtt{data[6]}$ is a transversal of
$\Tpqf{p}{q}{f}$ in $\Gpqf{p}{q}{f}$; and $\mathtt{data[7]}$ is a transversal
of $\Gpqf{p}{q}{f}$ in $\Gpq{p}{q}.$ Therefore, $|\mathtt{data[5]}| =$\\$|\mathtt{data[6]}||\mathtt{data[7]}|$. This is equivalent to
$\vert\frac{\Gpq{p}{q}}{\Tpqf{p}{q}{f}}\vert 
  = \vert\frac{\Gpqf{p}{q}{f}}{\Tpqf{p}{q}{f}}\vert\,
    \vert\frac{\Gpq{p}{q}}{\Gpqf{p}{q}{f}}\vert$.   
\end{corollary}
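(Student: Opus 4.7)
The plan is to invoke parts (vii)--(ix) of the Main Theorem and verify that the three lists returned by $\mathtt{clidata([p,q])}$ are precisely the transversals whose existence and spanning properties are asserted there. First I would note that by the construction of $\mathtt{clidata}$, each of $\mathtt{data}[5]$, $\mathtt{data}[6]$, and $\mathtt{data}[7]$ consists of Grassmann basis monomials from $\cb{B}$, and hence each entry is an element of the vee group $\Gpq{p}{q}$ (and, in the case of $\mathtt{data}[6]$, an element of $\Gpqf{p}{q}{f}$, since its monomials span $\BK = f\cl_{p,q}f$ and so commute with $f$). Thus the problem reduces to showing that each list is a \emph{transversal}, i.e.\ contains exactly one representative from each left coset of the relevant subgroup.

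For $\mathtt{data}[5]$: by the definition used in $\mathtt{clidata}$, the vectors $\{mf : m \in \mathtt{data}[5]\}$ form a real basis of $S$. Suppose two monomials $m_1, m_2 \in \mathtt{data}[5]$ lay in the same left coset of $\Tpqf{p}{q}{f}$, so that $m_1 = m_2 t$ for some $t \in \Tpqf{p}{q}{f}$. Since every element of $\Tpqf{p}{q}{f}$ is (up to sign) a product of basis monomials $\be_{\iu_s}$ that appear in the factorization~(\ref{eq:f}) of $f$, we have $tf = \pm f$, whence $m_1 f = \pm m_2 f$, contradicting the $\BR$-linear independence of $\{m_1 f, m_2 f\}$. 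Thus distinct elements of $\mathtt{data}[5]$ represent distinct left cosets. Combined with the cardinality match $|\mathtt{data}[5]| = \dim_\BR S$, which by Theorem~\ref{maintheorem}(ix) equals $[\Gpq{p}{q} : \Tpqf{p}{q}{f}]$, this identifies $\mathtt{data}[5]$ with a transversal. Completely analogous arguments handle $\mathtt{data}[6]$ via Theorem~\ref{maintheorem}(vii) (the spanning set of $\BK$ over $\BR$ modulo $f$) and $\mathtt{data}[7]$ via Theorem~\ref{maintheorem}(viii) (the spanning set of $S$ over $\BK$ modulo $f$), using the same two ingredients: left-coset representatives acting on $f$ either trivially or by a scalar in $\BK f$, and the dimension of the target module over the appropriate ground ring matching the index of the subgroup.

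Finally, the numerical identity $|\mathtt{data}[5]| = |\mathtt{data}[6]|\,|\mathtt{data}[7]|$ is the tower law for indices: because $\Tpqf{p}{q}{f} \leq \Gpqf{p}{q}{f} \leq \Gpq{p}{q}$ by Theorem~\ref{maintheorem}(iii) and~(v), Lagrange's theorem yields
\begin{gather}
[\Gpq{p}{q} : \Tpqf{p}{q}{f}] = [\Gpq{p}{q} : \Gpqf{p}{q}{f}]\,[\Gpqf{p}{q}{f} : \Tpqf{p}{q}{f}],
\end{gather}
which is exactly the displayed formula once each index is identified with the size of the corresponding transversal.

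The only non-bookkeeping step is the linear-independence-to-distinct-cosets argument, and the key observation powering it is that elements of $\Tpqf{p}{q}{f}$ act on $f$ merely by a sign. I expect this to be the main conceptual point; everything else is an application of the Second/Third Isomorphism Theorems already invoked in the Main Theorem, together with Lagrange's theorem.
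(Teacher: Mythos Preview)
Your proposal is correct and follows the same skeleton as the paper's proof: both invoke parts (vii)--(ix) of the Main Theorem for the three transversal identifications and then apply Lagrange's theorem (the tower law) for the index relation $[\Gpq{p}{q}:\Tpqf{p}{q}{f}] = [\Gpq{p}{q}:\Gpqf{p}{q}{f}]\,[\Gpqf{p}{q}{f}:\Tpqf{p}{q}{f}]$.

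The difference is one of substance rather than strategy. The paper's own proof simply \emph{asserts} that parts (vii)--(ix) yield the identification of $\mathtt{data[5]}$, $\mathtt{data[6]}$, $\mathtt{data[7]}$ with the respective transversals; in the Main Theorem itself those parts are established for $p+q\le 9$ by computational verification with \texttt{CLIFFORD}. You instead supply a genuine theoretical argument: the observation that every $t\in\Tpqf{p}{q}{f}$ satisfies $tf=\pm f$, so monomials in the same $\Tpqf{p}{q}{f}$-coset yield $\BR$-linearly dependent spinors, whence a monomial basis of $S$ over~$\BR$ must hit each coset at most once. Combined with the cardinality match, this gives the transversal property without appeal to machine computation. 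This is a cleaner and more informative route than what the paper records, and it makes explicit the ``key observation'' that the paper leaves buried in the \texttt{CLIFFORD} verification. One small point to tighten: when you write that $\dim_\BR S = [\Gpq{p}{q}:\Tpqf{p}{q}{f}]$ ``by Theorem~\ref{maintheorem}(ix)'', note that (ix) only asserts the transversal \emph{spans}~$S$; the equality of cardinalities follows once your same coset argument is applied to the transversal itself to show it is also linearly independent, or alternatively from the direct count $|\Gpq{p}{q}|/|\Tpqf{p}{q}{f}| = 2^{p+q-k} = \dim_\BR S$ using Lemma~\ref{lemma1}.
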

\begin{proof}
From Thm.~\ref{maintheorem} part (ix) we find that $\mathtt{data[5]}$ is a
transversal of $\Tpqf{p}{q}{f}$ in $\Gpq{p}{q};$ from part (vii) we find that
$\mathtt{data[6]}$ is a transversal of $\Tpqf{p}{q}{f}$ in $\Gpqf{p}{q}{f}$;
and from part (viii) we find that $\mathtt{data[7]}$ is a transversal of
$\Gpqf{p}{q}{f}$ in $\Gpq{p}{q}$. Therefore, the relation $|\mathtt{data[5]}|
= |\mathtt{data[6]}||\mathtt{data[7]}|$ follows from~(\ref{eq:conj8}) and
Lagrange's Theorem.
\end{proof}

\begin{remark} Before we proceed, we summarize first some definitions 
cf.~\cite[Sect. 17.2]{lounesto}:

\begin{itemize}
\item $\Gpqe{p}{q} = \{g \in \cl_{p,q} \mid \tp(g)g=1\}$ -- `Signed vee group'
      $\Gpqe{p}{q}$ defined in~(\ref{eq:Gpq}).
\item $\Gpq{p}{q} = \{m \in \cb{B} \mid \pm m \}$ --  Salingaros' vee group.
\item $\Lip{p}{q} = \{s \in \cl_{p,q} \mid \forall x \in \BR^{p,q}, 
      sx\hat{s}^{-1} \in \BR^{p,q}\}$    -- Lipschitz group.
\item $\Pin(p,q) = \{s \in \Lip{p}{q} \mid s \tilde{s} = \pm 1 \} < 
      \Lip{p}{q}$ -- Pin group.
\item $\Spin(p,q) = \Pin(p,q) \cap \cl_{p,q}^{+} < \Pin(p,q) < \Lip{p}{q}$
      -- Spin group.
\item $\Spin_{+}(p,q) = \{s \in \Spin(p,q) \mid s \tilde{s} = 1 \}$
      -- for $p \not= 0$ and $q \not=0$, orthochronous isometries.
\end{itemize}
Now we collect a few facts relating the above defined groups to the groups commonly used. These facts are easy to prove and some are known:
\begin{itemize}
\item $\Gpq{p}{q} < \Gpqe{p}{q}$ because $\forall m \in \Gpq{p}{q}$ we have $\tp(m)=m^{-1}$ or $\tp(m)m=1$.
\item $\forall m \in \Gpq{p}{q}$ we have $\tp(m)m=1$ where 
$\tp = t_{\varepsilon} \circ \tilde{\phantom{u}}$. Thus, 
$\tilde{\phantom{u}} = \tp \circ t_{\varepsilon}$ (for $t_{\varepsilon}$ 
see~\cite[Disp. (21)]{part1} and so $\forall m \in \Gpq{p}{q}$ we have 
$m \tilde{m} = m \tp(t_{\varepsilon}(m)) = \pm 1$. Therefore, 
$$
\Gpq{p}{q} < \Pin(p,q) < \Lip{p}{q},
$$
thus $\Gpq{p}{q}$ is a discrete subgroup of $\Pin(p,q)$. Together with 
$\Gpq{p}{q} < \Gpqe{p}{q},$ we get
$$
\Gpq{p}{q} < \Pin(p,q) \cap \Gpqe{p}{q}.
$$
Thus, the above connects $\Gpq{p}{q},$ $\Pin(p,q)$ and $\Gpqe{p}{q}$.
\end{itemize}
\end{remark}

\noindent
We illustrate the above Thm.~\ref{maintheorem} with the following three examples.
\begin{example}
We consider $\cl_{1,1} \cong \Mat(2,\BR)$. In $\mathtt{CLIFFORD}$, we find
the following data for $\cl_{1,1}:$
\begin{gather}
  \mathtt{data} = [real,
                   2,
		   simple,
		   \frac12 \Id+ \frac12 \be_{12},
		   [\Id, \be_1], [\Id],
		   [\Id, \be_1]].
\end{gather} 
Let $f = \frac12 (1 + \be_{12}).$ Then, the groups are as follows:
\begin{gather}
\Gpq{1}{1} = \{\pm 1, \pm \be_{1}, \pm \be_{2}, \pm \be_{12} \},
\label{eq-G11}\\
\Gpqf{1}{1}{f} = \{\pm 1, \pm \be_{12} \},
\label{eq-G11f}\\
\Tpqf{1}{1}{f} = \{\pm 1, \pm \be_{12} \},
\label{eq-T11f}\\
\Kpqf{1}{1}{f} = \{\pm 1\}.
\label{eq-K11f}
\end{gather}
Thus, from \eqref{eq-G11}, \eqref{eq-G11f}, \eqref{eq-T11f} the quotient
groups and their transversals\footnote{%
Note that transversal sets, which just provide coset representatives, are
not unique: For the given (left) coset $aK$ of $K$, a normal subgroup of $G$,
we may assign $\ell(aK)=a$ or $\ell(aK)=b$ as long as $aK = bK$ or,
equivalently, $ab^{-1} \in K.$}
are:
\begin{gather}
  \Gpq{1}{1}/\Gpqf{1}{1}{f}
    = \{\Gpqf{1}{1}{f}, \be_{1}\Gpqf{1}{1}{f}\} \stackrel{\ell}{\rightarrow}
         \{1, \be_{1}\} 
    = \mathtt{data[7]},\\
  \Gpqf{1}{1}{f}/\Tpqf{1}{1}{f}
    = \{\Tpqf{1}{1}{f}\} \stackrel{\ell}{\rightarrow} \{1\}
    =  \mathtt{data[6]},\\
  \Gpq{1}{1}/\Tpqf{1}{1}{f}
    = \{\Tpqf{1}{1}{f},\be_{1}\Tpqf{1}{1}{f}\} 
         \stackrel{\ell}{\rightarrow} \{1,\be_{1}\}
    = \mathtt{data[5]}.
\end{gather}
\end{example}

\begin{example}
We consider $\cl_{1,2} \cong \Mat(2,\BC)$. In $\mathtt{CLIFFORD}$, we find the
following data for $\cl_{1,2}$:
\begin{gather}
\hspace*{-0.5ex}  \mathtt{data} = 
    [complex,
     2,
     simple,
     \frac12 \Id+ \frac12 \be_{13},
     [\Id, \be_{1}, \be_{2}, \be_{12}],
     [\Id,\be_{2}], [\Id, \be_{1}]].
\end{gather} 
Let $f = \frac12 (1 + \be_{13})$. Then, the groups are as follows:
\begin{gather}
  \Gpq{1}{2}
    = \{\pm 1, \pm \be_{1}, \pm \be_{2}, \pm \be_{3}, \pm \be_{12}, 
        \pm \be_{13}, \pm \be_{23}, \pm \be_{123}\},\\
  \Gpqf{1}{2}{f}
    = \{\pm 1, \pm \be_{2}, \pm \be_{13}, \pm \be_{123}\},\\
  \Tpqf{1}{2}{f}
    = \{\pm 1, \pm \be_{13} \},\\
  \Kpqf{1}{2}{f}
    = \{\pm 1, \pm \be_{2}\}.
\end{gather}
Thus, the quotient groups and their transversals are:
\begin{gather}
  \Gpq{1}{2}/\Gpqf{1}{2}{f}
    = \{\Gpqf{1}{2}{f}, \be_{1}\Gpqf{1}{2}{f}\}
        \stackrel{\ell}{\rightarrow} \{1, \be_{1}\} 
    = \mathtt{data[7]},\\
  \Gpqf{1}{2}{f}/\Tpqf{1}{2}{f} 
    = \{\Tpqf{1}{2}{f},\be_{2}\Tpqf{1}{2}{f}\}
        \stackrel{\ell}{\rightarrow} \{1,\be_{2}\} 
    = \mathtt{data[6]},
\end{gather}
\vspace*{-6ex}
\begin{multline}
  \Gpq{1}{2}/\Tpqf{1}{2}{f}
    = \{\Tpqf{1}{2}{f},\be_{1}\Tpqf{1}{2}{f},\be_{2}\Tpqf{1}{2}{f},
        \be_{12}\Tpqf{1}{2}{f}\} \stackrel{\ell}{\rightarrow}\\
      \{1,\be_{1},\be_{2},\be_{12}\}
    = \mathtt{data[5]}.
\end{multline}
\end{example}

\begin{example}
We consider $\cl_{1,3} \cong \Mat(2,\BH)$. In $\mathtt{CLIFFORD}$, we find the
following data for $\cl_{1,3}$:
\begin{multline}
  \mathtt{data} 
    = [quaternionic,
       2,
       simple,
       \frac12 \Id+ \frac12 \be_{14},\\ 
       [\Id,\be_{1},\be_{2},\be_{3},\be_{12},\be_{13},\be_{23},\be_{123}],
       [\Id,\be_{2},\be_{3},\be_{23}],
       [\Id,\be_{1}]].
\end{multline} 
Let $f = \frac12 (1 + \be_{14})$. Then, the groups are as follows:
\begin{gather}
   \Gpq{1}{4}
     = \{\pm 1,\pm\be_{1},\pm\be_{2},\pm\be_{3},\pm\be_{4}, 
         \pm\be_{12},\pm\be_{13},\pm\be_{14},\pm\be_{23},\notag \\ 
       \hspace*{2.0in} 
       \pm\be_{24},\pm\be_{34},\pm\be_{123},\pm\be_{124},\pm\be_{134},
       \pm\be_{1234}\},\\
   \Gpqf{1}{4}{f}
     = \{\pm 1,\pm\be_{2},\pm\be_{3},\pm\be_{14},\pm\be_{23},
         \pm\be_{124},\pm\be_{134},\pm\be_{1234}\},\\
   \Tpqf{1}{4}{f}
     = \{\pm 1,\pm\be_{14}\},\\
   \Kpqf{1}{4}{f}
     = \{\pm 1,\pm\be_{2},\pm\be_{3},\pm\be_{23}\}.
\end{gather}
Thus, the quotient groups and their transversals are:
\begin{gather}
   \Gpq{1}{4}/\Gpqf{1}{4}{f}
     = \{\Gpqf{1}{4}{f}, \be_{1}\Gpqf{1}{4}{f}\} 
        \stackrel{\ell}{\rightarrow} \{1, \be_{1}\} 
     = \mathtt{data[7]},\\
   \Gpqf{1}{4}{f}/\Tpqf{1}{4}{f}
     = \{\Tpqf{1}{4}{f},\be_{2}\Tpqf{1}{4}{f},\be_{3}\Tpqf{1}{4}{f},
        \be_{23}\Tpqf{1}{4}{f}\} 
        \stackrel{\ell}{\rightarrow} \hspace{1in} \notag \\ 
     \hspace*{3in} 
       \{1,\be_{2}, \be_{3}, \be_{23} \}
     = \mathtt{data[6]},\\
   \Gpq{1}{4}/\Tpqf{1}{4}{f}
     = \{\Tpqf{1}{4}{f},\be_{1}\Tpqf{1}{4}{f},\be_{2}\Tpqf{1}{4}{f},
         \be_{3}\Tpqf{1}{4}{f}, \hspace{1.55in}\notag \\
     \hspace{0.5in} 
       \be_{12}\Tpqf{1}{4}{f},\be_{13}\Tpqf{1}{4}{f},\be_{23}\Tpqf{1}{4}{f},
       \be_{123}\Tpqf{1}{4}{f}\} 
       \stackrel{\ell}{\rightarrow} \notag \\ 
     \hspace{2.0in} 
       \{1,\be_{1},\be_{2},\be_{3},\be_{12},\be_{13},\be_{23},\be_{123}\}
     = \mathtt{data[5]}.
\end{gather}
\end{example}

The transversal of $\Gpqf{p}{q}{f}$ in $\Gpq{p}{q}$ and the transversal of
$\Tpqf{p}{q}{f}$ in $\Gpqf{p}{q}{f}$ are listed in Tables 1--4 
in~\cite[Appendix A]{ablamowicz1998}. The stabilizer groups $\Gpqf{p}{q}{f}$
are listed in Tables 1--5 in~\cite[Appendix A]{part2}.

\medskip\section{Transposition scalar product on spinor spaces}
\label{tpscalarproduct}

In \cite[Ch. 18]{lounesto}, Lounesto discusses scalar products on $S = \cl_{p,q}f$ for simple Clifford algebras and on $\check{S} = S \oplus \hat{S} = \cl_{p,q}e,$ $ e = f+\hat{f},$ for semisimple Clifford algebras. Recall that $\hat{f}$ denotes the grade involution of $f$. It is well known that in each case the spinor representation is faithful. Following Lounesto, we let
\begin{align*}
\check{\BK} \quad \mbox{be either} \quad & \BK \quad \mbox{or} \quad \BK \oplus \hat{\BK},\\
\check{S}   \quad \mbox{be either} \quad & S   \quad \mbox{or} \quad S \oplus \hat{S} 
\end{align*}
when $\cl_{p,q}$ is simple or semisimple, respectively. Then, in the simple algebras, the two scalar products are
\begin{gather}
S \times S \rightarrow \BK, \quad (\psi,\phi) \mapsto 
\begin{cases} \beta_{+}(\psi,\phi) = s_1 \tilde{\psi}\phi \\
              \beta_{-}(\psi,\phi) = s_2 \bar{\psi}\phi 
\end{cases}
\label{eq:betas} 
\end{gather}
whereas in the semisimple algebras they are
\begin{gather}
\chS \times \chS \rightarrow \chK, \quad (\chpsi,\chphi) \mapsto 
\begin{cases} (\beta_{+}(\psi,\phi),\beta_{+}(\psig,\phig)) = 
              (s_1 \tilde{\psi}\phi,s_1 \tilde{\psig}\phig) \\
              (\beta_{-}(\psi,\phi),\beta_{-}(\psig,\phig)) = 
              (s_2 \bar{\psi}\phi,s_2 \bar{\psig}\phig) 
\end{cases}
\label{eq:betass} 
\end{gather}
when $\chpsi=\psi+\psig$ and $\chphi=\phi+\phig$, $\psi,\phi \in S,$ $\psig,\phig\in\hat{S},$ and where $\tilde{\psi},\tilde{\psig}$ (resp. $\bar{\psi},\bar{\psig}$) denotes reversion (resp. Clifford conjugation) of $\psi,\psig$ and $s_1,s_2$ are special monomials in the Clifford
algebra basis which guarantee that the products 
$s_1 \tilde{\psi}\phi, \, s_2 \bar{\psi}\phi,$ hence also 
$s_1 \tilde{\psig}\phig, \, s_2 \bar{\psig}\phig,$ belong to $\BK\cong\hat{\BK}$.\footnote{%
In simple Clifford algebras, the monomials $s_1$ and $s_2$ also satisfy: 
(i) $\tilde{f} = s_1 f s_1^{-1}$ and (ii) $\bar{f} = s_2 f s_2^{-1}$. The identity (i) (resp. (ii)) is also valid in the semisimple algebras provided $\beta_{+} \not\equiv 0$ 
(resp. $\beta_{-} \not\equiv 0$).}
The automorphism groups of $\beta_{+}$ and
$\beta_{-}$ are defined in the simple case as, respectively, $G_{+}=\{s \in \cl_{p,q} \mid s \tilde{s} =1 \}$ and $G_{-}=\{s \in \cl_{p,q} \mid s \bar{s} =1 \}$, and as 
$\fpower{G_{-}}$ and $\fpower{G_{+}}$ in the semisimple case. They are shown 
in~\cite[Tables 1 and 2, p. 236]{lounesto}. 

\subsection{Simple Clifford algebras}
\label{simple}

In Example~3 in~\cite{part2} it was shown that the transposition scalar product
in $S=\cl_{2,2}f$ is different from each of the two Lounesto's products whereas
Example~4 showed that the transposition product in $S=\cl_{3,0}f$ coincided with
$\beta_{+}$. Furthermore, it was remarked that $\tp(\psi)\phi$ always
equaled~$\beta_{+}$ for Euclidean signatures $(p,0)$ and~$\beta_{-}$ for
anti-Euclidean signatures $(0,q).$ We formalize this in the following.

\begin{proposition}
Let $\psi,\phi \in S=\cl_{p,q}f$ and $(\psi,\phi) \mapsto \tp(\psi)\phi =
\lambda f$, $\lambda \in \BK,$ be the transposition scalar product. Let 
$\beta_{+}$ and $\beta_{-}$ be the scalar products on~$S$ shown
in~(\ref{eq:betas}). Then, there exist monomials $s_1, s_2$ in the transversal
$\ell$ of $\Gpqf{p}{q}{f}$ in $\Gpq{p}{q}$ such that
\begin{gather}
\tp(\psi)\phi
  = \begin{cases} 
      \beta_{+}(\psi,\phi) 
        = s_1 \tilde{\psi}\phi , \quad \forall \psi,\phi \in \cl_{p,0}f,\\
          \beta_{-}(\psi,\phi)= s_2 \bar{\psi}\phi , 
	     \quad \forall \psi,\phi \in \cl_{0,q}f.
    \end{cases}
    \label{eq:tp=beta}
\end{gather}
\label{prop1} 
\end{proposition}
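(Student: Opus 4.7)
The plan is to show that in these two extreme signatures the transposition anti-involution $\tp$ collapses onto a familiar anti-involution: reversion $\tilde{\phantom{u}}$ in the Euclidean case, Clifford conjugation $\bar{\phantom{u}}$ in the anti-Euclidean case. Once this is established, both formulas in~(\ref{eq:tp=beta}) hold trivially with the identity $1$ playing the role of $s_1$ and $s_2$; the identity is always a legitimate transversal element by the convention $\ell(\Gpqf{p}{q}{f}) = 1$ built into the definition of a transversal.

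First I would compute $\tp(\be_i)$ for each generator $\be_i \in V$. Since $\be_i$ lies in the Salingaros vee group $\Gpq{p}{q}$ and $\tp(m)m = 1$ for every $m \in \Gpq{p}{q}$ by the very construction of $\tp$, we have $\tp(\be_i) = \be_i^{-1}$. Hence $\tp(\be_i) = \be_i$ when $\be_i^2 = +1$, i.e.\ in signature $(p,0)$, and $\tp(\be_i) = -\be_i$ when $\be_i^2 = -1$, i.e.\ in signature $(0,q)$. These are precisely the values taken by $\tilde{\be}_i$ and $\bar{\be}_i = \tilde{\hat{\be}}_i$ respectively, on the same generators and in the corresponding signature.

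Second I would invoke the fact that an anti-involution of $\cl_{p,q}$ is uniquely determined by its restriction to the generating subspace $V$. Since $\tp$ and $\tilde{\phantom{u}}$ agree on $V$ in signature $(p,0)$, and $\tp$ and $\bar{\phantom{u}}$ agree on $V$ in signature $(0,q)$, the two anti-involutions must coincide on the whole algebra in each case. Applying these algebra-wide identities to arbitrary $\psi,\phi \in S$ yields $\tp(\psi)\phi = \tilde{\psi}\phi$ in $\cl_{p,0}$ and $\tp(\psi)\phi = \bar{\psi}\phi$ in $\cl_{0,q}$; comparing with~(\ref{eq:betas}) one reads off $s_1 = s_2 = 1 \in \ell$.

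The main obstacle is essentially cosmetic: one must notice that the right-hand side of~(\ref{eq:betas}) allows any monomial from the transversal in the slot of $s_1, s_2$, and that the identity element trivially qualifies. Beyond this bookkeeping observation, the argument reduces to a single inspection of $\tp$ on the orthonormal generators of $V$ combined with the rigidity of anti-involutions under extension from $V$ to all of $\cl_{p,q}$.
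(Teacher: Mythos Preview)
Your proof is correct and takes essentially the same approach as the paper: both rest on the fact that $\tp$ coincides with reversion in $\cl_{p,0}$ and with conjugation in $\cl_{0,q}$, so that $s_1=s_2=1$ works. The paper simply cites this reduction from \cite[Cor.~1]{part1}, whereas you re-derive it directly from $\tp(\be_i)=\be_i^{-1}$ and the rigidity of anti-involutions.
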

\begin{proof}
In \cite[Cor. 1]{part1} it was shown that $\tp$ reduces to reversion (resp.
conjugation) anti-involution in $\cl_{p,0}$ for the Euclidean signature $(p,0)$
(resp. for the anti-Euclidean signature $(0,q)$).
\end{proof}

Let $u \in \cl_{p,q}$ and let $[u]$ be a matrix of $u$ in the spinor
representation $\pi_S$ of $\cl_{p,q}$ realized in the spinor
$(\cl_{p,q},\BK)$-bimodule ${}_{\cl}S_{\BK} \cong
\cl_{p,q}f\BK$. Then, by~\cite[Prop.~5]{part2}, 
\begin{equation}
[\tp(u)] = \begin{cases} 
    [u]^T        & \textit{if $p-q =0,1,2 \bmod 8;$} \\
    [u]^\dagger  & \textit{if $p-q =3,7 \bmod 8;$} \\
    [u]^\ddagger & \textit{if $p-q =4,5,6 \bmod 8;$}
\end{cases}
\label{eq:matdaggers}
\end{equation}
where $T$, $\dagger$, and $\ddagger$ denote, respectively, transposition,
complex Hermitian conjugation, and quaternionic Hermitian conjugation.
Thus, we immediately have:
\begin{proposition}
Let $\Gpqe{p}{q} \subset \cl_{p,q}$ where $\cl_{p,q}$ is a simple Clifford
algebra. Then, $\Gpqe{p}{q}$ is: The orthogonal group $O(N)$ when
$\BK \cong \BR$; the complex unitary group $U(N)$ when $\BK \cong \BC$; or,
the compact symplectic group $Sp(N) = U_\BH(N)$ when
$\BK \cong \BH$.\footnote{%
See Fulton and Harris~\cite{fultonharris} for a definition of the
quaternionic unitary group $U_\BH(N)$. In our notation we follow \textit{loc.
cit.} page 100, `Remark on Notations'.}
That is,
\begin{equation}
\Gpqe{p}{q} = \begin{cases} 
    O(N)  & \textit{if $p-q =0,1,2 \bmod 8;$} \\
    U(N)  & \textit{if $p-q =3,7 \bmod 8;$} \\
    Sp(N) & \textit{if $p-q =4,5,6 \bmod 8;$}
\end{cases}
\label{eq:simpleGpqe}
\end{equation}
where $N=2^k$ and $ k=q-r_{q-p}$. 
\label{propsimpleGpqe}
\end{proposition}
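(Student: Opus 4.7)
The plan is to transport the defining relation $\tp(g)g = 1$ across the faithful spinor representation $\pi_S: \cl_{p,q} \to \End_{\BK}({}_{\cl}S_{\BK}) \cong \Mat(N,\BK)$, where $N = 2^k = \dim_{\BK} S$ and $k = q - r_{q-p}$. Because $\cl_{p,q}$ is simple, $\pi_S$ is an isomorphism of $\BR$-algebras, so $g \in \Gpqe{p}{q}$ if and only if the matrix $[g] = \pi_S(g)$ satisfies $[\tp(g)][g] = I_N$. The classification then reduces to reading off which adjoint operation $\tp$ induces on $\Mat(N,\BK)$.

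First I would invoke Proposition~5 of \cite{part2} recalled in~(\ref{eq:matdaggers}): the representation intertwines $\tp$ with transposition, complex Hermitian conjugation, or quaternionic Hermitian conjugation according as $p-q \bmod 8$ falls in $\{0,1,2\}$, $\{3,7\}$, or $\{4,5,6\}$. Combining this with faithfulness gives
\begin{equation}
\pi_S(\Gpqe{p}{q}) = \{ A \in \Mat(N,\BK) \mid A^{\star} A = I_N \},
\end{equation}
where $\star \in \{T, \dagger, \ddagger\}$ is chosen from~(\ref{eq:matdaggers}). The right-hand side is by definition $O(N)$, $U(N)$, or $U_{\BH}(N) = Sp(N)$, respectively; applying $\pi_S^{-1}$ yields the asserted isomorphisms.

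Next I would pin down the value of $N$. Since $f$ is a primitive idempotent in $\cl_{p,q}$ with $k = q - r_{q-p}$ commuting square-one generators, a complete set of mutually annihilating primitive idempotents has $2^k$ members, and $S = \cl_{p,q} f$ is a faithful simple left module of $\BK$-dimension $N = 2^k$; hence $\cl_{p,q} \cong \Mat(N,\BK)$ with $N = 2^k$ by the Artin--Wedderburn theorem. This fixes the size of the matrix group.

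The only real subtlety, and therefore the main point to verify carefully, is the quaternionic case $\BK \cong \BH$. Here $\ddagger$ is conjugate-transpose using quaternionic conjugation, and one must check that the resulting group of matrices preserving the standard $\BH$-valued Hermitian form on $\BH^N$ matches Fulton and Harris's definition of $U_{\BH}(N) = Sp(N)$ (cf.~\cite{fultonharris}); this is purely a matter of aligning conventions, and once the definitions are unwound the identification is immediate. The real and complex cases are direct from the definitions of $O(N)$ and $U(N)$, so no further obstacle arises.
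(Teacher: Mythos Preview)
Your proposal is correct and follows essentially the same route as the paper: transport the condition $\tp(g)g=1$ through the faithful spinor representation $\cl_{p,q}\cong\Mat(N,\BK)$, invoke~(\ref{eq:matdaggers}) to identify $\tp$ with $T$, $\dagger$, or $\ddagger$, and read off $O(N)$, $U(N)$, or $Sp(N)$. Your version is actually more careful than the paper's, which omits the explicit verification of $N=2^k$ and the convention check for $U_{\BH}(N)=Sp(N)$.
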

\begin{proof}
The isomorphism $\cl_{p,q} \cong \Mat(N,\BK)$, where $\BK \cong f\cl_{p,q}f$
for a primitive idempotent~$f$, implies that for every
$u \in G_{p,q}^\varepsilon$ the matrix $[\tp(u)u] = [\tp(u)][u]=1$. Then,
\eqref{eq:matdaggers} implies that $[u]^T [u]=1,$ $[u]^\dagger [u]=1,$
or $[u]^\ddagger [u]=1$ depending on the value of $p-q \bmod 8$.  
\end{proof}

{\small
\begin{table}[t]
\label{tab:t11}
\begin{center}
\renewcommand{\arraystretch}{1.4}
\begin{tabular}{|p{0.4in}|p{0.4in}|p{0.4in}|p{0.41in}|p{0.4in}|p{0.4in}|p{0.4in}|c|}
\multicolumn{8}{c}
{\bf Table 1 (Part 1): Automorphism group $G_{p,q}^\varepsilon$ of $\tp(\psi)\phi$}\\
\multicolumn{8}{c}
{\bf in simple Clifford algebras $\cl_{p,q} \cong \Mat(2^k,\BR)$}\\
\multicolumn{8}{c}
{$k=q-r_{q-p}$, $p-q \neq 1 \bmod 4, \, p-q = 0,1,2 \bmod 8$}\\\hline
\centering{$(p,q)$} & \centering{$(0,0)$} & \centering{$(1,1)$} & \centering{$(2,0)$} & \centering{$(2,2)$} & \centering{$(3,1)$} & \centering{$(3,3)$} & $(0,6)$ \\\hline
\centering{$\Gpq{p}{q}$} & \centering{$O(1)$} & \centering{$O(2)$} & \centering{$\boxed{O(2)}$} &\centering{$O(4)$} & \centering{$O(4)$} & \centering{$O(8)$} & 
$\boxed{\boxed{O(8)}}$ \\\hline
\end{tabular}
\end{center}
\end{table}
}%

{\small
\begin{table}[t]
\label{tab:t12}
\begin{center}
\renewcommand{\arraystretch}{1.4}
\begin{tabular}{|p{0.42in}|p{0.42in}|p{0.47in}|p{0.47in}|p{0.62in}|p{0.47in}|c|}
\multicolumn{7}{c}
{\bf Table 1 (Part 2): Automorphism group $G_{p,q}^\varepsilon$ of $\tp(\psi)\phi$}\\
\multicolumn{7}{c}
{\bf in simple Clifford algebras $\cl_{p,q} \cong \Mat(2^k,\BR)$}\\
\multicolumn{7}{c}
{$k=q-r_{q-p}$, $p-q \neq 1 \bmod 4, \, p-q = 0,1,2 \bmod 8$}\\\hline
\centering{$(p,q)$} & \centering{$(4,2)$} & \centering{$(5,3)$} & \centering{$(1,7)$} & \centering{$(0,8)$} & \centering{$(4,4)$} & $(8,0)$\\\hline
\centering{$\Gpq{p}{q}$} & \centering{$O(8)$} & \centering{$O(16)$} & \centering{$O(16)$} & \centering{$\boxed{\boxed{O(16)}}$} & \centering{$O(16)$} & $\boxed{O(16)}$ \\\hline
\end{tabular}
\end{center}
\end{table}
}%

{\small
\begin{table}[t]
\label{tab:t21}
\begin{center}
\renewcommand{\arraystretch}{1.4}
\begin{tabular}{|c|c|c|c|c|c|c|c|c|}
\multicolumn{9}{c}
{\bf Table 2 (Part 1): Automorphism group $G_{p,q}^\varepsilon$ of $\tp(\psi)\phi$}\\
\multicolumn{9}{c}
{\bf in simple Clifford algebras $\cl_{p,q} \cong \Mat(2^k,\BC)$}\\
\multicolumn{9}{c}
{$k=q-r_{q-p}$, $p-q \neq 1 \bmod 4, \, p-q = 3,7 \bmod 8$}\\\hline
$(p,q)$ & $(0,1)$ & $(1,2)$ & $(3,0)$ & $(2,3)$ & $(0,5)$ & $(4,1)$ & $(1,6)$ & $(7,0)$ \\\hline
$\Gpq{p}{q}$ & $U(1)$ & $U(2)$ & $\boxed{U(2)}$ & $U(4)$ & $\boxed{\boxed{U(4)}}$ & $U(4)$ & $U(8)$ & $\boxed{U(8)}$ \\\hline
\end{tabular}
\end{center}
\end{table}
}%
{\small
\begin{table}[t]
\label{tab:t22}
\begin{center}
\renewcommand{\arraystretch}{1.4}
\begin{tabular}{|c|c|c|c|c|c|c|c|}
\multicolumn{8}{c}
{\bf Table 2 (Part 2): Automorphism group $G_{p,q}^\varepsilon$ of $\tp(\psi)\phi$}\\
\multicolumn{8}{c}
{\bf in simple Clifford algebras $\cl_{p,q} \cong \Mat(2^k,\BC)$}\\
\multicolumn{8}{c}
{$k=q-r_{q-p}$, $p-q \neq 1 \bmod 4, \, p-q = 3,7 \bmod 8$}\\\hline
$(p,q)$ & $(5,2)$ & $(3,4)$ & $(4,5)$ & $(6,3)$ & $(2,7)$ & $(0,9)$ & $(8,1)$\\\hline
$\Gpq{p}{q}$ & $U(8)$ & $U(8)$ & $U(16)$ & $U(16)$ & $U(16)$ & $\boxed{\boxed{U(16)}}$ & $U(16)$ \\\hline
\end{tabular}
\end{center}
\end{table}
}%

{\small
\begin{table}[t]
\label{tab:t31}
\begin{center}
\renewcommand{\arraystretch}{1.4}
\begin{tabular}{|p{0.41in}|p{0.6in}|p{0.6in}|p{0.5in}|p{0.43in}|p{0.43in}|c|}
\multicolumn{7}{c}
{\bf Table 3 (Part 1): Automorphism group $G_{p,q}^\varepsilon$ of $\tp(\psi)\phi$}\\
\multicolumn{7}{c}
{\bf in simple Clifford algebras $\cl_{p,q} \cong \Mat(2^k,\BH)$}\\
\multicolumn{7}{c}
{$k=q-r_{q-p}$, $p-q \neq 1 \bmod 4, \, p-q = 4,5,6 \bmod 8$}\\\hline
\centering{$(p,q)$} & \centering{$(0,2)$} & \centering{$(0,4)$} & \centering{$(4,0)$} & \centering{$(1,3)$} & \centering{$(2,4)$} & $(6,0)$\\\hline
\centering{$\Gpq{p}{q}$} & \centering{$\boxed{\boxed{Sp(1)}}$} & 
\centering{$\boxed{\boxed{Sp(2)}}$} & \centering{$\boxed{Sp(2)}$} & \centering{$Sp(2)$} & \centering{$Sp(4)$} & $\boxed{Sp(4)}$\\\hline
\end{tabular}
\end{center}
\end{table}
}%

{\small
\begin{table}[t]
\label{tab:t32}
\begin{center}
\renewcommand{\arraystretch}{1.4}
\begin{tabular}{|p{0.42in}|p{0.48in}|p{0.48in}|p{0.48in}|p{0.48in}|p{0.48in}|c|}
\multicolumn{7}{c}
{\bf Table 3 (Part 2): Automorphism group $G_{p,q}^\varepsilon$ of $\tp(\psi)\phi$}\\
\multicolumn{7}{c}
{\bf   in simple Clifford algebras $\cl_{p,q} \cong \Mat(2^k,\BH)$}\\
\multicolumn{7}{c}
{$k=q-r_{q-p},$ $p-q \neq 1 \bmod 4, \, p-q = 4,5,6 \bmod 8$}\\\hline
\centering{$(p,q)$} & \centering{$(1,5)$} & \centering{$(5,1)$} & \centering{$(6,2)$} & \centering{$(7,1)$} & \centering{$(2,6)$} & $(3,5)$\\\hline
\centering{$\Gpq{p}{q}$} & \centering{$Sp(4)$} & \centering{$Sp(4)$} & \centering{$Sp(8)$} & \centering{$Sp(8)$} & \centering{$Sp(8)$} & $Sp(8)$\\\hline
\end{tabular}
\end{center}
\end{table}
}

The scalar product $\tp(\psi)\phi$ was computed with $\mathtt{CLIFFORD}$ for
all signatures $(p,q)$, $p+q \leq 9$. Observe that as expected, in Euclidean
(resp. anti-Euclidean) signatures $(p,0)$ (resp. $(0,q)$) the group
$\Gpqe{p}{0}$ (resp. $\Gpqe{0}{q}$) coincides with the corresponding
automorphism group of the scalar product~$\beta_{+}$ (resp. $\beta_{-}$)
listed in \cite[Table 1, p. 236]{lounesto} (resp.
\cite[Table 2, p. 236]{lounesto}. This is indicated by a single 
(resp. double) box around the group symbol in Tables~4 and~5.
For example, for the Euclidean signatures $(5,0)$, we show $\Gpqe{5}{0}$ as  
$\boxed{\fpower{Sp(2)}}$ whereas for the anti-Euclidean signatures $(0,7)$,
we show $\Gpqe{0}{7}$ as $\boxed{\boxed{\fpower{O(8)}}}$.

For simple Clifford algebras, the automorphism groups $\Gpqe{p}{q}$ are
displayed in Tables~1,~2, and~3 (for semisimple algebras, see Tables~4 and~5).
In each case the form is positive definite and non-degenerate. Also, unlike in
the case of the forms $\beta_{+}$ and $\beta_{-},$ there is no need for the
extra monomial factor like $s_1,s_2$ in~(\ref{eq:betas}) (and~(\ref{eq:betass}))  
to guarantee that the product $\tp(\psi)\phi$ belongs to $\BK$ since this is always the
case~\cite{part1, part2}. Recall that the only role of the monomials $s_1$ and
$s_2$ is to permute entries of spinors $\tilde{\psi}\phi$ and $\bar{\psi}\phi$
to assure that $\beta_{+}(\psi,\phi)$ and $\beta_{-}(\psi,\phi)$ belong to the
(skew) field~$\BK$. That is, more precisely, that $\beta_{+}(\psi,\phi)$ and
$\beta_{-}(\psi,\phi)$ have the form $\lambda f = f \lambda$ for some $\lambda$
in $\BK$. The idempotent $f$ in the spinor basis in $S$ corresponds uniquely to the
identity coset $\Gpqf{p}{q}{f}$ in the quotient group
$\Gpq{p}{q}/\Gpqf{p}{q}{f}$. Based on~\cite[Prop. 2]{part2} we know that
since the vee group $\Gpq{p}{q}$ permutes entries of any spinor~$\psi$, the
monomials $s_1$ and $s_2$ belong to the transversal of the stabilizer
$\Gpqf{p}{q}{f} \lhd \Gpq{p}{q}$~\cite[Cor. 2]{part2}.\footnote{%
In \cite[Page 233]{lounesto}, Lounesto states correctly that ``the element $s$
can be chosen from the standard basis of $\cl_{p,q}$." In fact, one can
restrict the search for $s$ to the transversal of the stabilizer
$\Gpqf{p}{q}{f}$ in $\Gpq{p}{q}$ which has a much smaller size $2^{q-r_{q-p}}$ compared to
the size $2^{p+q}$ of the Clifford basis.}

One more difference between the scalar products $\beta_{+}$ and $\beta_{-},$
and the transposition product $\tp(\psi)\phi$ is that in some signatures one
of the former products may be identically zero whereas the transposition
product is never identically zero. The signatures $(p,q)$ in which one of the
products $\beta_{+}$ or $\beta_{-}$ is identically zero can be easily found
in~\cite[Tables 1 and 2, p. 236]{lounesto} as the automorphism group of the
product is then a general linear group. 

We summarize our findings. Let $\tp(\psi)\phi$ be the scalar product on the
right $\BK$-linear spinor space $S$ in a simple Clifford algebra $\cl_{p,q}$.
\begin{itemize}
\item In the real case ($p-q = 0, 1, 2 \bmod 8$) when $\dim_\BR S=N$, the
scalar product (modulo~$f$) is just the symmetric non-degenerate and positive
definite form 
\begin{gather}
  \tp(\psi)\phi = \tp(\phi)\psi = \rexpansion{\psi}{\phi}{N}.
\label{eq:realform}
\end{gather}
\item In the complex case ($p-q = 4, 5, 6 \bmod 8$) when $\dim_\BC S=N$, the
scalar product (modulo~$f$) is just the standard (complex) Hermitian
non-degenerate and positive definite form 
\begin{gather}
  \tp(\psi)\phi = \overline{\tp(\phi)\psi} = \cexpansion{\psi}{\phi}{N}
\label{eq:complexform}
\end{gather}
where $\overline{\tp(\phi)\psi}$ denotes ``complex'' conjugation in $\BK.$
\item In the quaternionic case ($p-q = 3, 7 \bmod 8$) when $\dim_\BH S=N$, the
scalar product (modulo~$f$) is just the standard (quaternionic) Hermitian (or,
``symplectic scalar product'') which is also non-degenerate and positive
definite.\footnote{%
An $\BR$-bilinear form $K: V \times V \rightarrow \BK$ on a real, complex,
or quaternionic vector space~$V$ is \textit{non-degenerate} if whenever
$K(v,w)=0, \forall v$ then $w=0.$ The form is \textit{positive definite}
if $K(v,v)>0$ for $v \neq 0$~\cite{fultonharris}.}
\begin{gather}
  \tp(\psi)\phi = \overline{\tp(\phi)\psi} = \cexpansion{\psi}{\phi}{N}
\label{eq:quatform}
\end{gather}  
where $\overline{\tp(\phi)\psi}$ denotes ``quaternionic'' conjugation in
$\BK.$
\end{itemize}

Here we give a few low-dimensional examples of the transposition scalar
product $\tp(\psi)\phi$.
\begin{example}
Consider $\cl_{2,2} \cong \Mat(4,\BR)$ with $f=\frac14(1+\be_{13})(1+\be_{24})$.
Then, $\BK = f\cl_{2,2}f \cong \spn_\BR \{1\} \cong \BR$ and a transversal of
the stabilizer $\Gpqf{2}{2}{f} \lhd \Gpq{2}{2}$ is $\cb{M} = 
\{ 1, \be_{1}, \be_{2}, \be_{12}\}$. Let $\psi, \phi \in S=\cl_{2,2}f$.
Hence, 
\begin{align}
  \psi &= \psi_1 f + \psi_2 \be_{1}f + \psi_3 \be_{2}f + \psi_4 \be_{12}f,\\  
  \phi &= \phi_1 f + \phi_2 \be_{1}f + \phi_3 \be_{2}f + \phi_4 \be_{12}f,  
\end{align}
where $\psi_i,\phi_i \in \BR,\, i=1,\ldots,4.$ It is then easy to check that
the form
\begin{equation}
  \tp(\psi)\phi 
    = (  \psi_{1} \phi_{1}+\psi_{2} \phi_{2} + \psi_{3} \phi_{3} 
       + \psi_{4} \phi_{4})f
\label{eq:tpex1}
\end{equation}
is invariant under $O(4)$. Of course, in spinor representation, we have
\begin{gather}
[\tp(\psi)] = \left[\begin{matrix} \psi_1 & \psi_2 & \psi_3 & \psi_4\\
                                        0 & 0 & 0 & 0\\
                                        0 & 0 & 0 & 0\\
                                        0 & 0 & 0 & 0 \end{matrix}\right] 
\quad \mbox{and} \quad
[\phi] = \left[\begin{matrix} \phi_1 & 0 & 0 & 0\\
                               \phi_2 & 0 & 0 & 0\\
                               \phi_3 & 0 & 0 & 0\\
                               \phi_4 & 0 & 0 & 0 \end{matrix}\right]. 
\end{gather}
Notice that the scalar products $\beta_{+}$ and $\beta_{-}$ modulo~$f$ are:
\begin{align}
  \beta_{+}(\psi,\phi) 
     &= s_1 \tilde{\psi} \phi 
      = (  \psi_{3} \phi_{2} - \psi_{2} \phi_{3} + \psi_{4} \phi_{1}
         - \psi_{1} \phi_{4})f \notag\\
     &= \tr(\left[\begin{matrix} 
               \psi_1 & \psi_2 & \psi_3 & \psi_4 
            \end{matrix}\right]
            \left[\begin{matrix} 0 & 0 & \phm 0 & -1\\
                     0 & 0 & -1 & \phm 0\\
                     0 & 1 & \phm 0 & \phm 0\\
                     1 & 0 & \phm 0 & \phm 0 
            \end{matrix}\right]
            \left[\begin{matrix} 
              \phi_1 \\ \phi_2 \\ \phi_3 \\ \phi_4 
            \end{matrix}\right])f
\end{align}
\begin{align}
  \beta_{-}(\psi,\phi) 
     &= s_2 \bar{\psi} \phi
      = (  \psi_{4} \phi_{1} +\psi_{2} \phi_{3} - \psi_{1} \phi_{4} 
         - \psi_{3} \phi_{2})f \notag\\
     &= \tr(\left[\begin{matrix}
                     \psi_1 & \psi_2 & \psi_3 & \psi_4 
                  \end{matrix}\right]
                  \left[\begin{matrix} 0 & \phm0 & 0 & -1\\
                     0 & \phm0 & 1 & \phm 0\\
                     0 & -1 & 0 & \phm 0\\
                     1 & \phm0 & 0 & \phm 0 
                  \end{matrix}\right]
                  \left[\begin{matrix} 
                    \phi_1 \\ \phi_2 \\ \phi_3 \\ \phi_4
                  \end{matrix}\right])f. 
\end{align}
Each product is invariant under $Sp(4, \BR)$.
(cf.~\cite[Tables 1 and 2, p. 236]{lounesto}) The monomials $s_1 = s_2 = 
\be_{12}$ belong to $\cb{M}.$\footnote{%
The choice of each of the monomials $s_1$ and $s_2$ is of course not unique.
In general, there are $2^{p+q}/N$ such choices modulo the commutator subgroup
$\{\pm 1\}$ of $\Gpq{p}{q}$.}
For example, the monomial $s_1$ permutes entries in the first column as seen
below for $\beta_{+}$:
\begin{gather}
  [\tilde{\psi}\phi] 
  = \left[\begin{matrix} 
        0 & 0 & 0 & 0\\
        0 & 0 & 0 & 0\\
        0 & 0 & 0 & 0\\
  \lambda & 0 & 0 & 0 
    \end{matrix}\right]
\quad \mbox{whereas} \quad
  [s_1 \tilde{\psi}\phi]
  = \left[\begin{matrix} 
  -\lambda & 0 & 0 & 0\\
         0 & 0 & 0 & 0\\
         0 & 0 & 0 & 0\\
         0 & 0 & 0 & 0
    \end{matrix}\right]
\end{gather}
where $\lambda = \psi_{2}\phi_{3} - \psi_{3}\phi_{2} + \psi_{1}\phi_{4} -
\psi_{4}\phi_{1} \in \BR.$ A similar role is played by the monomial~$s_2$
in~$\beta_{-}$.
\label{example1}
\end{example}
The above example shows that, in general, $\tilde{\psi}$ and $\bar{\psi}$,
do not belong to the dual~$S^{*}$ of the spinor space $S$. On the other
hand, $\tp(\psi)$ always belongs to~$S^{*}$ for every
$\psi \in S$~\cite{part1,part2}. We have classified all automorphism groups
of the transposition scalar product for real simple Clifford algebras. The
results are summarized in Table~1.

\begin{example}
Consider $\cl_{1,2} \cong \Mat(2,\BC)$ with $f=\frac12(1+\be_{13})$. Then,
$\BK=f\cl_{1,2}f \cong \spn_{\BR}\{1, \be_{2}\} \cong \BC$ and a transversal
of the stabilizer $\Gpqf{1}{2}{f} \lhd \Gpq{1}{2}$ is $\cb{M} = \{1, \be_1\}$.
Let $\psi,\phi \in S =\cl_{1,2}f$. Recalling that the left minimal ideal~$S$
is a \textit{right} $\BK$-module, we write ``complex'' coefficients on the
right:
\begin{align}
  \psi 
     &= f \psi_1 + \be_{1}f \psi_2 
      = f(  \psi_{11} + \psi_{12} \be_{2}) + \be_{1}f (\psi_{21} 
          + \psi_{22} \be_{2}),\\ 
  \phi 
     &= f \phi_1 + \be_{1}f \phi_2 
      = f(\phi_{11} + \phi_{12} \be_{2}) 
          + \be_{1}f (\phi_{21} + \phi_{22} \be_{2}),
\end{align}
where $\psi_{ij},\phi_{ij} \in \BR,\, \psi_i, \phi_i \in \BK,\, i,j=1,2$. It
is then easy to check that
\begin{align}
  \tp(\psi)\phi
    &= \overline{\psi}_1 \phi_1 + \overline{\psi}_2 \phi_2 
     = ( \psi_{11} \phi_{11}+\psi_{22} \phi_{22}+\psi_{21} \phi_{21}
        +\psi_{12} \phi_{12}) f + \notag \\
    &\phantom{=}\hspace*{1.3in}
       ( \psi_{21} \phi_{22}-\psi_{22} \phi_{21}+\psi_{11} \phi_{12}
        -\psi_{12} \phi_{11}) \be_{2}f. 
\end{align}
Thus, $\tp(\psi)\phi$ is invariant under the complex unitary group $U(2)$.
Then, 
\begin{align}
  \beta_{+}(\psi,\phi)
    = s_1 \tilde{\psi} \phi 
   &= ( \psi_{11} \phi_{21}+\psi_{22} \phi_{12}+\psi_{21} \phi_{11}
       +\psi_{12} \phi_{22})f + \notag\\
   &\hspace*{3ex}
      ( \psi_{21} \phi_{12}-\psi_{22} \phi_{11}+\psi_{11} \phi_{22}
       -\psi_{12} \phi_{21}) \be_{2}f \notag\\
   &=(\overline{\psi}_1\phi_2+\overline{\psi}_2\phi_1)f 
    =\tr(\left[ \begin{matrix}
                    \overline{\psi}_1 & \overline{\psi}_2 
                \end{matrix}\right] 
                \left[\begin{matrix}
                    0 & 1\\1 & 0
                \end{matrix}\right]
		\left[\begin{matrix}
		   \phi_1\\ \phi_2
		\end{matrix}\right])f
\label{eq:betaplus12}
\end{align}
where $s_1 = \be_{1} \in \cb{M}$ and $\overline{\psi}_1,\overline{\psi}_2$
denotes complex conjugation. Since the matrix is similar to a diagonal
matrix $\diag(1,-1)$, the automorphism group of this form is the complex
unitary group $U(1,1)$. Then,
\begin{align}
  \beta_{-}(\psi,\phi)
     = s_2 \bar{\psi} \phi 
    &= ( \psi_{11} \phi_{21}+ \psi_{22} \phi_{12}-\psi_{21} \phi_{11}
        -\psi_{12} \phi_{22}) f + \notag\\
    &\hspace*{3ex}
       (-\psi_{21} \phi_{12}-\psi_{22} \phi_{11}+\psi_{11} \phi_{22}
        +\psi_{12} \phi_{21}) \be_{2}f \notag\\
    &= (\psi_1 \phi_2 - \psi_2 \phi_1)f 
     = \tr(\left[ \begin{matrix}
                    \psi_1 & \psi_2
		  \end{matrix}\right]
		  \left[\begin{matrix}
		     0 & 1\\-1 & 0
		  \end{matrix}\right]
		  \left[\begin{matrix}
		    \phi_1\\ \phi_2
		  \end{matrix}\right])f
\label{eq:betaminus12}     
\end{align}
where $s_2 = \be_{1} \in \cb{M}$. Thus, $\beta_{-}$ is invariant under
$Sp(2,\BC)$. See also~\cite[Tables 1 and 2, p. 236]{lounesto}.\footnote{%
For the definitions of $U(1,1)$ and $Sp(2,\BC)$, see \cite[Ch. 18]{lounesto}.}
\label{example2}
\end{example}
We have classified all automorphism groups of the transposition scalar product
for complex simple Clifford algebras. The results are summarized in Table~2.
In our next example we consider a quaternionic simple Clifford algebra.
\begin{example}
Consider $\cl_{1,3} \cong \Mat(2,\BH)$ with $f=\frac12(1+\be_{14})$. Then,
$\BK=f\cl_{1,3}f \cong \spn_\BR\{1,\be_2,\be_3,\be_{23}\} \cong \BH$ and a
transversal of the stabilizer $\Gpqf{1}{3}{f} \lhd \Gpq{1}{3}$ is
$\cb{M}=\{1,\be_1\}$. Let $\psi,\phi \in S=\cl_{1,3}f$. Hence,
\begin{gather}
  \psi 
    = f \psi_1 + \be_{1}f \psi_2 \quad \mbox{and} \quad \phi 
    = f \phi_1 + \be_{1}f \phi_2
\end{gather}
where
\begin{align}
  \psi_i 
    &=   \psi_{i1} + \psi_{i2} \be_{2} +  \psi_{i3} \be_{3} 
       + \psi_{i4} \be_{23} \in \BK,\\
  \phi_i 
    &=   \phi_{i1} + \phi_{i2} \be_{2} +  \phi_{i3} \be_{3}
       + \phi_{i4} \be_{23} \in \BK 
\end{align}
for $\psi_{ij},\phi_{ij} \in \BR,\, i=1,2; j=1,\ldots,4$. Then we get:
\begin{align}
  \tp(\psi,\phi)
    &= (\overline{\psi}_1\phi_1 + \overline{\psi}_2\phi_2)f 
     = (\sum_{i=1}^2 \sum_{j=1}^{4} \psi_{ij}\phi_{ij}) f+ \notag \\
    &\hspace*{3ex}
       ( \asympsiphi{11}{12}-\asympsiphi{13}{14}+\asympsiphi{21}{22}
        -\asympsiphi{23}{24})\be_{2}f + \notag\\
    &\hspace*{3ex}
       ( \asympsiphi{11}{13}+\asympsiphi{12}{14}+\asympsiphi{21}{23}
        +\asympsiphi{22}{24})\be_{3}f + \notag \\
    &\hspace*{3ex}
       ( \asympsiphi{11}{14}-\asympsiphi{12}{13}+\asympsiphi{21}{24}
        -\asympsiphi{22}{23}) \be_{23}f, 
\end{align}
which is invariant under $U_\BH(2)=Sp(2)$. Here,
$\overline{\psi}_1,\overline{\psi}_2$ denotes quaternionic
conjugation.\footnote{%
To shorten display, we set  $\asympsiphi{11}{12} = \psi_{11}\phi_{12}
-\psi_{12}\phi_{11}$ and $\sympsiphi{11}{21} = \psi_{11}\phi_{21}
+\psi_{21}\phi_{11}$, etc.}
The other two scalar products are as follows:
\begin{align}
  \beta_{+}(\psi,\phi)
    = s_1 \tilde{\psi} \phi 
   &= ( \sympsiphi{11}{21}+\sympsiphi{12}{22}+\sympsiphi{13}{23}
       +\sympsiphi{14}{24}) f + \notag \\
   &\hspace*{3ex}
      ( \asympsiphi{11}{22}-\asympsiphi{12}{21}-\asympsiphi{13}{24}
       +\asympsiphi{14}{23})\be_{2}f + \notag\\
   &\hspace*{3ex}
      ( \asympsiphi{11}{23}+\asympsiphi{12}{24}-\asympsiphi{13}{21}
       -\asympsiphi{14}{22})\be_{3}f +\notag \\
   &\hspace*{3ex}
      ( \asympsiphi{11}{24}-\asympsiphi{12}{23}+\asympsiphi{13}{22}
       +\asympsiphi{21}{14})\be_{23}f \notag\\
   &= (\overline{\psi}_1 \phi_2+\overline{\psi}_2\phi_1)f 
    = \tr(\left[\begin{matrix}
              \overline{\psi}_1 & \overline{\psi}_2 
          \end{matrix}\right] 
          \left[\begin{matrix}
	      0 & 1\\1 & 0
	  \end{matrix}\right]
	  \left[\begin{matrix}
	     \phi_1\\ \phi_2
	  \end{matrix}\right])f
\label{eq:betaplus13}
\end{align}
where $s_1 = \be_1 \in \cb{M}$. Since the matrix is similar to a
diagonal matrix $\diag(1,-1)$, the automorphism group of this form is the
quaternionic unitary group $U_{1,1}\BH$. Then,
\begin{align}
  \beta_{-}(\psi,\phi)
    = s_2 \bar{\psi} \phi
   &= ( \asympsiphi{11}{21}-\asympsiphi{12}{22}-\asympsiphi{13}{23}
       +\asympsiphi{14}{24}) f + \notag \\
   &\hspace*{3ex}
      ( \asympsiphi{11}{22}+\asympsiphi{12}{21}+\asympsiphi{13}{24}
       +\asympsiphi{14}{23})\be_{2}f+\notag\\
   &\hspace*{3ex}
      ( \asympsiphi{11}{23}-\asympsiphi{12}{24}+\asympsiphi{13}{21}
       -\asympsiphi{14}{22})\be_{3}f+\notag \\
   &\hspace*{3ex}
      ( \sympsiphi{11}{24}+\sympsiphi{12}{23}-\sympsiphi{13}{22}
       -\sympsiphi{14}{21}) \be_{23}f \notag\\
   &= (\starc{\psi}_1 \phi_2-\starc{\psi}_2\phi_1)f
    = \tr(\left[\begin{matrix}
              \starc{\psi}_1 & \starc{\psi}_2 
	  \end{matrix}\right] 
	  \left[\begin{matrix}
	    0 & 1\\-1 & 0
	  \end{matrix}\right]
	  \left[\begin{matrix}
	    \phi_1\\ \phi_2
	  \end{matrix}\right])f
\label{eq:betaminus13}
\end{align}
where $s_2 = \be_1 \in \cb{M}$ and $\starc{\psi}_1,\starc{\psi}_2$ denotes an
anti-involution on $\BK \cong \BH$ which sends $\psi_{i1} + \psi_{i2} \be_{2}
+ \psi_{i3} \be_{3} + \psi_{i4} \be_{23} \mapsto \psi_{i1} + \psi_{i2} \be_{2}
+ \psi_{i3} \be_{3} - \psi_{i4} \be_{23}$. In~\cite[Tables 1 and 2, p. 236]{lounesto} we find that $\beta_{+}$ and $\beta_{-}$ scalar products are invariant under $Sp(2,2)$.\footnote{%
$Sp(2,2) = U(2,2) \cap Sp(4,\BC)$ and $Sp(2,2) /\{\pm 1 \} \cong SO_{+}(4,1)$.}
\label{example3}
\end{example}
\begin{remark}
If we define quaternions as pairs of complex numbers $\BH=\BC \oplus \BC j$ 
via the Cayley-Dickson doubling process~\cite[Sect. 23.2]{lounesto} where
$\BC=\BR \oplus \BR i,$ that is, write quaternions as $q=z+w j$ or as
$q=(z,w)$ where $z,w \in \BC$, then the star anti-involution
in~(\ref{eq:betaminus13}) is an $\BR$-linear map of $\BH$ sending
$z+w j \mapsto z+\bar{w} j $ or $(z,w)\mapsto (z,\bar{w})$ where $\bar{w}$
is the complex conjugate of $w$. Observe that the $\BR$-linear map of $\BH$
sending $z+w j \mapsto \bar{z}- w j$ or $(z,w)\mapsto (\bar{z},-w)$ is the
quaternionic conjugation of~$\BH$.
\end{remark}

We have classified all automorphism groups of the transposition scalar product
for quaternionic simple Clifford algebras. The results are summarized in 
Table~3. 

\subsection{Semisimple Clifford algebras}
\label{semisimple}
Faithful spinor representation of a semisimple Clifford algebra $\cl_{p,q}$
($p-q =1 \bmod 4$) is realized in a left ideal $\check{S} = S\oplus \hat{S}
= \cl_{p,q}e$ where $e = f + \hat{f}$ for any  primitive idempotent $f$.
Recall that $\hat{\phantom{u}}$ denotes grade involution. We refer
to~\cite[pp. 232--236]{lounesto} for some of the concepts. In particular, 
$S = \cl_{p,q}f$ and $\Sg = \cl_{p,q}\fg.$ Thus, every spinor $\check{\psi}
\in \check{S}$ has unique components $\psi \in S$ and $\psig \in \Sg$. We
refer to the elements $\check{\psi} \in \check{S}$ as ``spinors" whereas to
its components $\psi \in S$ and $\psig \in \Sg$ we refer as
``$\frac12$-spinors". 

For the semisimple Clifford algebras $\cl_{p,q},$ we will view spinors 
$\check{\psi} \in \check{S}= S \oplus \hat{S}$ as ordered pairs 
$(\psi,\psi_g) \in S \times \hat{S}$ when $\check{\psi} = \psi + \psi_g.$
Likewise, we will view elements $\check{\lambda}$ in the double fields
$\check{\BK} = \BK \oplus \hat{\BK}$ as ordered pairs $(\lambda,\lambda_g)
\in \BK \times \hat{\BK}$ when $\check{\lambda} = \lambda + \lambda_g$.
As before, $\BK = f \cl_{p,q}f$ while $\hat{\BK} = \hat{f} \cl_{p,q}\hat{f}$.
Recall that $\check{\BK} \cong \fpower{\BR}
\stackrel{\mbox{\scriptsize{def}}}{=} \BR \oplus \BR$ or
$\check{\BK} \cong \fpower{\BH} \stackrel{\mbox{\scriptsize{def}}}{=}
\BH \oplus \BH$ when, respectively, $p-q = 1 \bmod 8$, or $p-q = 5 \bmod 8$.

In this section we classify automorphism groups of the transposition scalar
product
\begin{gather}
  \check{S} \times \check{S} \rightarrow \check{\BK}, \quad 
  (\check{\psi},\check{\phi}) \mapsto 
  \tp(\check{\psi},\check{\phi}) \stackrel{\mbox{\scriptsize{def}}}{=} 
  (\tp(\psi)\phi, \tp(\psig)\phig) \in \check{\BK}
\label{eq:tsprod}
\end{gather}
when $\chpsi = \psi+\psig$ and $\chphi = \phi+\phig$.

\begin{proposition}
Let $\Gpqe{p}{q} \subset \cl_{p,q}$ where $\cl_{p,q}$ is a semisimple Clifford
algebra. Then, $\Gpqe{p}{q}$ is: The double orthogonal group
$\fpower{O(N)} \stackrel{\mbox{\scriptsize{def}}}{=} O(N) \times O(N)$ when
$\check{\BK} \cong \fpower{\BR}$ or the double compact symplectic group
$\fpower{Sp(N)} \stackrel{\mbox{\scriptsize{def}}}{=} Sp(N) \times Sp(N)$
when $\check{\BK} \cong \fpower{\BH}$.\footnote{%
Recall that $Sp(N)=U_{\BH}(N)$ where $U_{\BH}(N)$ is the quaternionic
unitary group~\cite{fultonharris}.}
That is,
\begin{equation}
  \Gpqe{p}{q}
    = \begin{cases} 
        \fpower{O(N)} = O(N) \times O(N)   & \textit{when $p-q =1 \bmod 8;$} \\
        \fpower{Sp(N)} = Sp(N) \times Sp(N) & \textit{when $p-q =5 \bmod 8;$}
      \end{cases}
\label{eq:semisimpleGpqe}
\end{equation}
where $N=2^{k-1}$ and $k=q-r_{q-p}$. 
\label{propsemisimpleGpqe}
\end{proposition}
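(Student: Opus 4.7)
The plan is to mimic the proof of Proposition~\ref{propsimpleGpqe}, but working on both primitive summands of the semisimple algebra simultaneously. First, since $p-q \equiv 1 \bmod 4$, the pseudoscalar $\omega = \be_{1}\be_{2}\cdots \be_{p+q}$ is central with $\omega^{2}=1$, so the central idempotents $\frac{1}{2}(1\pm\omega) = f\cdot\ldots + \hat f\cdot\ldots$ induce the direct-sum decomposition
\begin{equation}
\cl_{p,q} \;\cong\; \Mat(N,\BK)\oplus \Mat(N,\hat\BK), \qquad N=2^{k-1},\ k=q-r_{q-p},
\end{equation}
where $\BK\cong\BR$, $\hat\BK\cong\BR$ if $p-q\equiv 1\bmod 8$ and $\BK\cong\BH$, $\hat\BK\cong\BH$ if $p-q\equiv 5\bmod 8$. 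Correspondingly, $\check S = S\oplus\hat S = \cl_{p,q}f\oplus\cl_{p,q}\hat f$, and every $u\in\cl_{p,q}$ has a block-diagonal spinor matrix $[u] = ([u]_S,[u]_{\hat S})$ acting on the two $\frac12$-spinor components independently.

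Next, I would extend formula~(\ref{eq:matdaggers}) to the semisimple case. The key observation is that $\tp$ commutes with the grade involution in the sense required to preserve the splitting $1 = f\omega_{+} + \hat f\omega_{-}$, so $\tp$ restricts to an anti-involution on each simple summand. On each block, the arguments of~\cite[Prop.~5]{part2} carried over verbatim give
\begin{equation}
[\tp(u)] \;=\; \bigl([u]_S^{\,\sharp},\,[u]_{\hat S}^{\,\sharp}\bigr), \qquad
\sharp = \begin{cases} T & \text{if } p-q\equiv 1\bmod 8,\\ \ddagger & \text{if } p-q\equiv 5\bmod 8,\end{cases}
\end{equation}
which is consistent with the coordinate-wise definition~(\ref{eq:tsprod}) of the transposition scalar product on $\check S$.

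Finally, applying the isomorphism $\cl_{p,q}\cong \Mat(N,\BK)\oplus\Mat(N,\hat\BK)$ to the defining equation $\tp(g)g=1$ yields the pair of independent matrix equations
\begin{equation}
[g]_S^{\,\sharp}[g]_S \;=\; I_N, \qquad [g]_{\hat S}^{\,\sharp}[g]_{\hat S} \;=\; I_N,
\end{equation}
whose solution sets are precisely $O(N)$ (resp.\ $Sp(N)=U_{\BH}(N)$) in each factor. Since no coupling between the two blocks is imposed, the full invariance group is the direct product given in~(\ref{eq:semisimpleGpqe}).

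The main obstacle I anticipate is the compatibility of $\tp$ with the central idempotents $\frac12(1\pm\omega)$, i.e.\ verifying that $\tp$ descends to an anti-involution on each simple summand rather than swapping them. This has to be checked by inspecting the action of $\tp$ on $\omega$ (using $\tp = t_{\varepsilon}\circ\widetilde{\phantom{u}}$ from~\cite{part1}) separately in the cases $p-q\equiv 1\bmod 8$ and $p-q\equiv 5\bmod 8$; once $\tp(\omega)=\omega$ is established, the block-diagonal claim follows and the remainder of the argument is purely formal.
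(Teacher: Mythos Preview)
Your approach is essentially the paper's: decompose the semisimple algebra into its two simple summands and apply Proposition~\ref{propsimpleGpqe} to each block, noting that $N$ drops by a factor of~$2$. The paper's proof is in fact even terser than yours and does not pause over the compatibility issue you flag; but that obstacle dissolves immediately once you recall from~\cite[Lemma~1]{part2} that $\tp(m)=m^{-1}$ for every basis monomial, so in particular $\tp(\omega)=\omega^{-1}=\omega$ since $\omega^{2}=1$ in the semisimple case $p-q\equiv 1\bmod 4$, and hence $\tp$ fixes the central idempotents $\tfrac{1}{2}(1\pm\omega)$ rather than swapping them.
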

\begin{proof}
This result follows directly when we recall that $\cl_{p,q} \cong
\fpower{\Mat(N,\BR)}$ or $\cl_{p,q} \cong \fpower{\Mat(N,\BH)}$ in the
semisimple case depending whether $p-q =1 \bmod 8$ or $p-q =5 \bmod 8$,
respectively. Then, apply Proposition~\ref{propsimpleGpqe} to each non
universal component of $\cl_{p,q}$ while remembering to reduce $N$ by a
factor of $2$.
\end{proof}

The automorphism groups $\Gpqe{p}{q}$ for semisimple Clifford algebras
$\cl_{p,q}$ for $p+q\leq 9$ are shown in Tables~4 and~5. All results in these
tables, like in Tables 1, 2, and 3, have been verified with $\mathtt{CLIFFORD}$.

{\small
\begin{table}[t]
\label{tab:t4}
\begin{center}
\renewcommand{\arraystretch}{1.4}
\begin{tabular}{|p{0.3in}|p{0.45in}|p{0.33in}|p{0.33in}|p{0.53in}|p{0.33in}|p{0.43in}|p{0.43in}|c|}
\multicolumn{9}{c}
{\bf Table 4: Automorphism group $\Gpqe{p}{q}$ of $\tp(\psi)\phi$}\\
\multicolumn{9}{c}
{\bf in semisimple Clifford algebras $\cl_{p,q} \cong \fpower{\Mat(2^{k-1},\BR)}$}\\
\multicolumn{9}{c}
{$k=q-r_{q-p}, $ $p-q = 1 \bmod 4, \, p-q = 1 \bmod 8$}\\\hline
\centering{$(p,q)$} & \centering{$(1,0)$} & \centering{$(2,1)$} & \centering{$(3,2)$} & \centering{$(0,7)$} & \centering{$(4,3)$} & \centering{$(1,8)$} & \centering{$(5,4)$} & $(9,0)$\\\hline
\centering{$\Gpqe{p}{q}$} & \centering{$\boxed{\fpower{O(1)}}$} & \centering{$\fpower{O(2)}$} & \centering{$\fpower{O(4)}$} & 
\centering{$\boxed{\boxed{\fpower{O(8)}}}$} & 
\centering{$\fpower{O(8)}$} & \centering{$\fpower{O(16)}$} & \centering{$\fpower{O(16)}$} & $\boxed{\fpower{O(16)}}$ \\\hline 
\end{tabular}
\end{center}
\end{table}
}%
{\small
\begin{table}[t]
\label{tab:t5}
\begin{center}
\renewcommand{\arraystretch}{1.4}
\begin{tabular}{|c|c|c|c|c|c|c|c|}
\multicolumn{8}{c}
{\bf Table 5: Automorphism group $\Gpqe{p}{q}$ of $\tp(\psi)\phi$}\\
\multicolumn{8}{c}
{\bf in semisimple Clifford algebras 
  $\cl_{p,q} \cong \fpower{\Mat(2^{k-1},\BH)}$}\\
\multicolumn{8}{c}
{$k=q-r_{q-p},$ $p-q = 5 \bmod 4, \, p-q =  \bmod 8$}\\\hline
$(p,q)$ & $(0,3)$ & $(1,4)$ & $(5,0)$ & $(2,5)$ & $(6,1)$ & $(3,6)$ & $(7,2)$\\\hline
$\Gpqe{p}{q}$ & $\boxed{\boxed{\fpower{Sp(1)}}}$ & $\fpower{Sp(2)}$ & 
$\boxed{\fpower{Sp(2)}}$ & $\fpower{Sp(4)}$ & $\fpower{Sp(4)}$ & $\fpower{Sp(8)}$ & 
$\fpower{Sp(8)}$ \\\hline
\end{tabular}
\end{center}
\end{table}
}%

We show two examples.
\begin{example}
Consider $\cl_{2,1} \cong \fpower{\Mat(2,\BR)}$ with $f=\frac14(1+\be_1)
(1+\be_{23})$ and $\hat{f} = \frac14(1-\be_1)(1+\be_{23})$. Then, a transversal
of the stabilizer $\Gpqf{2}{1}{f}=\Gpqf{2}{1}{\hat{f}} \lhd \Gpq{2}{1}$ is
$\cb{M} = \{1,\be_{2}\}$. Let $\psi,\phi \in S = \cl_{2,1}f$ and
$\psig,\phig \in \hat{S} = \cl_{2,1}\hat{f}$. Hence, 
\begin{alignat}{2}
  \psi
    &= \psi_{11}f + \psi_{21}\be_{2}f, 
    &  \phi 
    &= \phi_{11}f + \phi_{21}\be_{2}f, \\ 
  \psig
    &= \psi_{11}\hat{f} + \psi_{21}\be_{2}\hat{f}, \qquad 
    &  \phig 
    &= \phi_{11}\hat{f} + \phi_{21}\be_{2}\hat{f} 
\end{alignat}
where $\psi_{i1},\phi_{i1} \in \BR$, $i=1,2.$\footnote{%
By a small abuse of notation, we write the same numerical components for
$\psi$ and $\psig$, also for $\phi$ and~$\phig$, although in general they
may be different. In order to identify the automorphism groups what matters
is really the form of the polynomial in the numerical components but not the
actual components.}
It is then easy to see that the transposition scalar product on $S$:
\begin{gather}
  \tp(\psi)\phi
    = (\psi_{11}\phi_{11}+\psi_{21}\phi_{21})f
  \label{eq:firstform} 
\end{gather}
is invariant under $O(2)$. Likewise, the transposition scalar product on
$\hat{S}$
\begin{gather}
  \tp(\psig)\phig
    = (\psi_{11}\phi_{11}+\psi_{21}\phi_{21})\hat{f} 
\label{eq:secondform}
\end{gather}
is also invariant under $O(2)$. Thus, on $\check{S} = S \oplus \hat{S},$ we
get
\begin{gather}
  \tp(\check{\psi})\check{\phi}
    = (\tp(\psi)\phi,\tp(\psig)\phig)
    = (\psi_{11}\phi_{11}+\psi_{21}\phi_{21})(f,\hat{f}) \in \check{K}
\end{gather}
where $\check{K} \cong \fpower{\BR}$ and $\Gpqe{2}{1} = \fpower{O(2)}$. Of
course, in the faithful spinor representation in $\check{S} = S \oplus \hat{S}$
over the double field $\fpower{\BR}$, we have
\begin{gather}
  [\tp(\check{\psi})]
    = \left[\begin{matrix}
       (\psi_{11},\psi_{11}) & (\psi_{21},\psi_{21})\\
       (0,0) & (0,0)
      \end{matrix} \right] 
      \quad 
      \mbox{and} 
      \quad
  [\check{\phi}]
    = \left[\begin{matrix}
       (\phi_{11},\phi_{11}) & (0,0)\\
       (\phi_{21},\phi_{21}) & (0,0)
      \end{matrix}\right] 
\end{gather}
which gives the same result in matrix form:
\begin{gather}
  [\tp(\check{\psi})][\check{\phi}]
    = \left[\begin{matrix}
       (\lambda, \lambda_g) & (0,0)\\
       (0,0) & (0,0)
      \end{matrix}\right] 
\end{gather}
where $\lambda_g =\lambda=\psi_{11}\phi_{11}+\psi_{21}\phi_{21} \in \BR$. For
a comparison, notice that the scalar products $\beta_{+}$ and $\beta_{-}$ on
$\check{S}$ are
\begin{align}
  \beta_{+}(\check{\psi},\check{\phi}) 
    &= (0,0),\\
  \beta_{-}(\check{\psi},\check{\phi})
    &= (\psi_{11}\phi_{21}-\psi_{21}\phi_{11})(f,\hat{f})
\end{align}
and their automorphism groups are, respectively, $\fpower{GL(2,\BR)}$ and  
$\fpower{Sp(2,\BR)}$.\footnote{%
In~\cite[Table 1, page 236]{lounesto}, Lounesto shows for short only
$GL(2,\BR)$ whereas in~\cite[Table 2, page 236]{lounesto}, he also shows 
$\fpower{Sp(2,\BR)}$. Thus, our definition of the symplectic group
$Sp(N,\BR)$ is the same as his.}
\end{example}

\begin{example}
Consider $\cl_{1,4} \cong \fpower{\Mat(2,\BH)}$ with $f = \frac14
(1+\be_{234})(1+\be_{15})$ and $\hat{f} = \frac14(1-\be_{234})(1+\be_{15})$.
Then, a transversal of the stabilizer $\Gpqf{1}{4}{f} = \Gpqf{1}{4}{\hat{f}}
\lhd \Gpq{1}{4}$ is $\cb{M}=\{1,\be_{1} \}$. Let $\psi,\phi \in S=\cl_{1,4}f$
and $\psig,\phig \in \hat{S}=\cl_{1,4}\hat{f}$. Let's assign, modulo~$f$,
quaternionic components to $\frac12$-spinors as follows:
$$
  \psi_i \colonequal \psi_{11} + \psi_{12}\be_{2} + \psi_{13}\be_{3} 
             + \psi_{14}\be_{23}, \quad
  \phi_i \colonequal \phi_{11} + \phi_{12}\be_{2} + \phi_{13}\be_{3}
             + \phi_{14}\be_{23}, 
$$
for $i=1,2$. Thus, 
\begin{alignat}{2}
  \psi
    &= f\psi_{1} + \be_{1}f\psi_{2},
    &  \phi 
    &= f\phi_{1} + \be_{1}f\phi_{2}, \\ 
  \psig
    &= \hat{f}\psi_{1} + \be_{1}\hat{f}\psi_{2}, \qquad 
    &  \phig
    &= \hat{f}\phi_{1} + \be_{1}\hat{f}\phi_{2}. 
\end{alignat}
Then, modulo~$f$, we find the transposition scalar product on $S$
\begin{gather}
  \tp(\psi)\phi
    = (\overline{\psi}_1 \phi_{1} + \overline{\psi}_2 \phi_{2})f
\end{gather}
where $\overline{\psi}_i$ denotes ``quaternionic'' conjugation. It is easy to
see that this product is invariant under $Sp(2)=U_\BH(2)$. Furthermore,
modulo~$\hat{f}$, we find the transposition scalar product on~$\hat{S}$
\begin{gather}
  \tp(\psig)\phig
    = (\overline{\psi}_1 \phi_{1} + \overline{\psi}_2 \phi_{2}) \hat{f}
\end{gather}
hence it is also invariant under $Sp(2)$. Therefore, on $\check{S} =
S\oplus\hat{S}$, we get
\begin{gather}
  \tp(\check{\psi})\check{\phi}
    = (\tp(\psi)\phi,\tp(\psig)\phig)
    = (\overline{\psi}_1 \phi_{1} + \overline{\psi}_2 \phi_{2})(f,\hat{f}) 
       \in \check{\BK} \cong \fpower{\BH}   
\end{gather}
and $\Gpqe{1}{4} = \fpower{Sp(2)}$. We can also verify this result using the
faithful spinor representation in $\check{S}=S \oplus \hat{S}$ over the double
(skew) field $\fpower{\BH}$:
\begin{gather}
  [\tp(\check{\psi})]
    =\left[\begin{matrix} 
      (\overline{\psi}_{1},\overline{\psi}_{1}) & 
      (\overline{\psi}_{2},\overline{\psi}_{2}) \\
      (0,0) & 
      (0,0)
     \end{matrix}\right]
     \quad 
     \mbox{and}
     \quad
  [\check{\phi}]
    =\left[\begin{matrix}
      (\phi_{1},\phi_{1}) & (0,0)\\
      (\phi_{2},\phi_{2}) & (0,0)
     \end{matrix}\right] 
\end{gather}
which gives the same result in matrix form:
\begin{gather}
  [\tp(\check{\psi})][\check{\phi}]
    = \left[\begin{matrix}
        (\lambda, \lambda_g) & (0,0)\\
        (0,0) & (0,0)
      \end{matrix}\right] 
\end{gather}
where $\lambda_g =\lambda=\overline{\psi}_{1}\phi_{1} + \overline{\psi}_{2}
\phi_{2} \in \BH$. We again compare this product with $\beta_{+}$ and
$\beta_{-}$ on $\check{S}$ which are:
\begin{align}
  \beta_{+}(\check{\psi},\check{\phi})
    &= (\overline{\psi}_1 \phi_2 + \overline{\psi}_2 \phi_1)(f,\hat{f}),\\
  \beta_{-}(\check{\psi},\check{\phi})
    &= (0,0), 
\end{align}
and their automorphism groups are, respectively, $\fpower{U_{1,1}\BH} =
U_{1,1}\BH \times U_{1,1}\BH$ and $\fpower{GL(2,\BH)}.$\footnote{%
The group $U_{p,q}\BH$ is defined in~\cite[Page 99]{fultonharris} as the
group of automorphisms of a \textit{Hermitian form} of signature $(p,q)$ on
a quaternionic vector space $V$ of dimension $p+q$ thus having the standard
expression $\sum_{i=1}^p \overline{v}_i w_i - \sum_{i=p+1}^{p+q} 
\overline{v}_i w_i.$ In~\cite[Table 1, page 236]{lounesto} we find
$\fpower{Sp(2,2)}$ and in~\cite[Table 2, page 236]{lounesto} we find 
$GL(2,\BH)$, respectively.}
\end{example}

\medskip\section{Clifford algebra as a twisted group ring}
\label{groupalgebra}

For any group $G$, its commutator subgroup $G'$ is a normal subgroup of~$G$,
and the quotient group $G/G'$ is abelian.~\cite[Prop. 5.57]{rotman} Hence,
since the vee group $\Gpq{p}{q}$ is not abelian (for $n=p+q\ge 2$) of order
$2\cdot 2^{p+q}$ and its commutator subgroup $\Gpq{p}{q}' = \{ \pm 1\},$ the
abelian quotient group is just
$$
  \Gpq{p}{q}/\Gpq{p}{q}' \cong (\BZ_2)^n
    = \underbrace{\BZ_2 \times \cdots \times \BZ_2}_{n} 
      \quad \mbox{with} \quad n=p+q.
$$
Thus, we can view the universal Clifford algebra $\cl_{p,q}$ of a
non-degenerate quadratic form $Q$ of signature $(p,q)$ as a
\textit{twisted group ring} $\cl_{p,q} = \BR^t[(\BZ_2)^n]$ over the abelian
group $(\BZ_2)^n.$\footnote{%
For a general theory of group rings see~\cite{passman}.} 
Let $e_1,\ldots,e_n$ be generators for $n$ isomorphic copies of $\BZ_2$, that
is,
$$
  \BZ_2 \cong \langle e_1 \rangle \cong \langle e_2 \rangle 
        \cong \cdots \cong  \langle e_n \rangle
$$
where the cyclic groups $\langle e_i \rangle = \{e_i^{a_i}\}$, $a_i=0,1$,
$i=1,\ldots,n,$ are written multiplicatively.\footnote{%
We could simplify notation by identifying an ordered $n$-tuple 
$(e_1^{a_1},e_2^{a_2},\ldots, e_n^{a_n})$ from $(\BZ_2)^n$ with a
\textit{monomial} $e_1^{a_1} e_2^{a_2}\cdots e_n^{a_n}$ in $\BR^t[(\BZ_2)^n]$.}
Then, the \textit{twisted group ring} $\BR^t[(\BZ_2)^n]$~\cite[Sect. 2]{passman}
is an associative $\BR$-algebra with basis $\{\bar{x} \mid x \in (\BZ_2)^n\}$
and multiplication defined distributively for all $x,y \in (\BZ_2)^n$ as
$$ 
  \bar{x} \bar{y} 
    = \gamma(x,y)\, \overline{xy}, 
  \qquad \gamma(x,y) \in \BR^{*} 
    = \BR \setminus \{0\}.
$$
Due to the required associativity of the algebra
$(\bar{x} \bar{y}) \, \bar{z} = 
\bar{x} \, (\bar{y} \, \bar{z})$ for any $x,y,z \in (\BZ_2)^n,$ the function
$\gamma: (\BZ_2)^n \times (\BZ_2)^n \rightarrow \BR^{*}$ must satisfy the
relation
$$
  \gamma(x,y)\gamma(xy,z) 
     = \gamma(y,z) \gamma(x,yz), 
       \qquad \forall z,y,z \in (\BZ_2)^n
$$
which implies that $\gamma$ is a $2$-cocycle. Furthermore, it is
shown~\cite[Lemma 2.1]{passman} that a general twisted group ring $K^t[G]$ has
an identity $\gamma(1,1)^{-1}\bar{1}$ and that the elements $\bar{x} \in 
K^t[G]$ are all units. The inverse of $\bar{x}$ is:
$$
  \bar{x}^{-1}
    = \gamma(x,x^{-1})\gamma(1,1)^{-1} \overline{x^{-1}}
    = \gamma(x^{-1},x)\gamma(1,1)^{-1} \overline{x^{-1}}. 
$$

One can easily show that the formula $(xy)^{-1} = y^{-1}x^{-1}, \forall x,y
\in G,$ carries over to $(\bar{x}\bar{y})^{-1} = \bar{y}^{-1}\bar{x}^{-1},
\forall \bar{x},\bar{y} \in K^t[G]$: 
\begin{align}
  (\bar{x}\bar{y})(\bar{y}^{-1}\bar{x}^{-1})
    &= \bar{x}(\bar{y} \bar{y}^{-1}) \bar{x}^{-1}
     = \bar{x} \gamma(1,1)^{-1} \bar{1} \bar{x}^{-1}
     = \bar{x}\bar{x}^{-1}\gamma(1,1)^{-1}\bar{1}\notag\\
    &= \gamma(1,1)^{-1}\bar{1}\gamma(1,1)^{-1}\bar{1}
     = \gamma(1,1)^{-1}\gamma(1,1)^{-1}\bar{1}\bar{1}\notag\\
    &= \gamma(1,1)^{-1}\gamma(1,1)^{-1}\gamma(1,1)\bar{1}
     = \gamma(1,1)^{-1}\bar{1} 
\end{align}

In~\cite[Page 5]{passman}, one finds this definition of a
map $*: K[G] \rightarrow K[G]$ in a general group ring $K[G]$:
$$
  \left(\sum a_x x\right)^* = \sum a_x x^{-1} 
$$
where the summation is over $x \in G$. It follows easily that, for any
$\alpha,\beta \in K[G],$ we have: 
(i) $(\alpha + \beta)^* =  \alpha^* + \beta^*$, 
(ii) $(\alpha \beta)^{*} = \beta^{*}\alpha^{*}$,
(iii) $\alpha^{**} = \alpha$, 
(iv) $1^* = 1$ for the unity. 
Hence, $*$ is an anti-involution of the group ring. 
  
Notice that a similar map can be defined in any \textit{twisted} group ring 
$*:K^t[G] \rightarrow K^t[G]$ namely
$$
  \left(\sum a_x \bar{x}\right)^* = \sum a_x \bar{x}^{-1}. 
$$
Furthermore, this map is an anti-involution of $K^t[G]$ as it satisfies the
same four properties (i)--(iv). In particular, property (ii) follows from
the identity $(\bar{x}\bar{y})^{-1} =\bar{y}^{-1}\bar{x}^{-1}$ shown above.

We recall from~\cite[Lemma 1]{part2} some properties of the transposition
anti-involution $\tp$, and, in particular, its action $\tp(m)=m^{-1}$ on a
monomial $m$ in the Grassmann basis $\cb{B}$ which is, as we see now,
identical to the action $*(m)=m^{-1}$ on every $m \in \cb{B}$. We formulate
our concluding result.
\begin{theorem}
\label{theorem2}
The anti-involution $\tp$ on the Clifford algebra $\cl_{p,q}$ is the
anti-involution $*$ of $\cl_{p,q}$ viewed as the twisted group ring
$\BR^t[(\BZ_2)^n].$ 
\end{theorem}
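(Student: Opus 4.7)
The plan is to verify the equality $\tp = *$ by reducing it to a check on a basis, exploiting the fact that both maps are $\BR$-linear anti-involutions. First I would make explicit the identification of $\cl_{p,q}$ with $\BR^t[(\BZ_2)^n]$: the generators $\be_1,\ldots,\be_n$ of $\cl_{p,q}$ correspond to the distinguished basis lifts $\bar{e}_1,\ldots,\bar{e}_n$ of the generators $e_1,\ldots,e_n$ of $(\BZ_2)^n$, so that a Grassmann basis monomial $\be_{\iu} = \be_{i_1}\cdots \be_{i_r} \in \cb{B}$ (with $i_1<\cdots<i_r$) corresponds to a twisted basis element $\overline{e_{i_1}\cdots e_{i_r}}$. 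The $2$-cocycle $\gamma$ is then forced: it encodes the sign that arises from reordering the generators (via the anticommutation relations $\be_i\be_j = -\be_j\be_i$) together with the squares $\be_i^2 = \pm 1$ dictated by the signature $(p,q)$.

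Next I would observe that both maps in question are $\BR$-linear. For $\tp$ this is built into its definition as an anti-involution of the real algebra $\cl_{p,q}$; for $*$ it is immediate from the formula $\bigl(\sum a_x \bar{x}\bigr)^* = \sum a_x \bar{x}^{-1}$ already recorded in the excerpt. Because any element of $\cl_{p,q} = \BR^t[(\BZ_2)^n]$ is a unique $\BR$-linear combination of the basis elements $\{\bar{x} \mid x \in (\BZ_2)^n\}$, it suffices to verify that $\tp(\bar{x}) = *(\bar{x})$ for every $x \in (\BZ_2)^n$, i.e.\ on each Grassmann basis monomial.

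The key input is then Lemma 1 of~\cite{part2}, which the excerpt recalls explicitly: for every monomial $m \in \cb{B}$ one has $\tp(m) = m^{-1}$, the inverse being taken in the vee group $\Gpq{p}{q}$ (equivalently in $\cl_{p,q}$). On the other hand, $*(\bar{x}) = \bar{x}^{-1}$ holds by definition of $*$ on a twisted group ring. Under the identification made in the first step, the vee-group inverse of a basis monomial and the twisted-ring inverse of the corresponding $\bar{x}$ coincide up to the normalizing factor $\gamma(1,1)^{-1}$ (or $\gamma(x,x^{-1})$) discussed around Lemma~2.1 of~\cite{passman}; a short bookkeeping check shows that with our conventions $\gamma(1,1) = 1$ and the two inverses agree. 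Extending by $\BR$-linearity finishes the proof.

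The main obstacle is not any deep computation but precisely this bookkeeping on the identification: one must confirm that the signs produced by the $2$-cocycle $\gamma$ on the twisted ring side match those produced by reordering and squaring generators on the Clifford side, and that the normalizations are chosen so that $\bar{1}$ is the true identity and $\bar{x}^{-1}$ matches the Clifford inverse of the corresponding monomial. Once this compatibility is in place, the equality $\tp(\bar{x}) = \bar{x}^{-1} = *(\bar{x})$ is immediate on basis elements, and the theorem follows by linearity.
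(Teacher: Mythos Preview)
Your proposal is correct and follows exactly the argument the paper gives: the paper does not supply a formal proof environment for this theorem but simply observes, in the sentence immediately preceding it, that $\tp(m)=m^{-1}$ on every Grassmann basis monomial $m\in\cb{B}$ (by \cite[Lemma~1]{part2}) while $*(\bar{x})=\bar{x}^{-1}$ by definition, so the two $\BR$-linear anti-involutions agree on a basis and hence everywhere. Your write-up is in fact more careful than the paper's, since you make explicit the bookkeeping on the identification $\cl_{p,q}\cong\BR^t[(\BZ_2)^n]$ and the normalization $\gamma(1,1)=1$ that the paper leaves implicit.
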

For a Hopf algebraic discussion of Clifford algebras as twisted group
algebras, see~\cite{albuquerquemajid,morier-genoudovsienko} and references
therein.

\begin{remark}
Following~\cite[Chapt. 21]{lounesto}, consider $n$-tuples $\underline{a} = 
a_1a_2\ldots a_n$ of binary digits $a_i =0,1,$ with addition $\underline{a}
\oplus \underline{b} = \underline{c}$ defined by term wise addition modulo~$2$.
This is the natural component wise product group structure on $(\BZ_2)^n$.
Thus, with this operation, the set of $n$-tuples is a group and the group
characters are \textit{Walsh functions}
$$
  w_{\underline{a}}(\underline{b}) = (-1)^{\sum_{i=1}^{n} a_ib_i}
$$
As group characters, the Walsh functions satisfy $w_{\underline{k}}
(\underline{a} \oplus \underline{b}) = w_{\underline{k}}(\underline{a}) 
w_{\underline{k}}(\underline{b})$. For any binary $n$-tuple 
$\underline{a}=a_1a_2 \ldots a_n$, let's adopt notation 
$$
  e_{\underline{a}} \stackrel{\mbox{\scriptsize{def}}}{=} 
     e_1^{a_1}e_2^{a_2}\cdots e_n^{a_n}
$$
for any element $(e_1^{a_1},e_2^{a_2},\ldots, e_n^{a_n}) \in (\BZ_2)^n$.
Then, Lounesto shows how one can define the Clifford product in the algebra
$\cl_{p,q}$ on the ``basis monomials'' (later extended by linearity and
associativity to all algebra elements) with the help of Walsh functions,
namely,
\begin{gather}
  e_{\underline{a}} e_{\underline{b}}  \stackrel{\mbox{\scriptsize{def}}}{=}
    (-1)^{\sum_{i=1}^{p} a_i b_i} w_{\underline{a}}(h(\underline{b})) 
      e_{\underline{a} \oplus \underline{b}}
\label{eq:walshproduct}
\end{gather}
where $h$, the inverse of the so called \textit{Gray code}, is defined as 
$$
  h(\underline{a})_i= \sum_{j=1}^{i} a_j \bmod 2.
$$ 
Thus, we have a close connection between viewing Clifford algebra $\cl_{p,q}$
as a twisted group ring $\BR^t[(\BZ_2)^n]$ and as a real associative 
algebra with the monomial basis $\{e_{\underline{a}}\}$ for all binary
$n$-tuples $\underline{a}$ and multiplication defined with the help of the
Walsh function and inverse Gray code as in~(\ref{eq:walshproduct}). In
particular, one should be able to express the twisting $2$-cocycle $\gamma$
needed to define the product in $\BR^t[(\BZ_2)^n]$ 
(see~\cite{albuquerquemajid,morier-genoudovsienko}) in terms of the
Walsh functions and the Gray code.   
\end{remark}

\medskip\section{Conclusions}
\label{conc}

We summarize and conclude our work on the transposition involution in real
Clifford algebras, including the developments of~\cite{part1,part2}. The present
set of three papers develops in a concise and explicit manner all important
aspects of the transposition map, which we have lifted from the matrix
representations to an abstract map in real Clifford algebras. We did this in an
exhaustive manner for all signatures $\varepsilon=(p,q)$ for $p+q\leq 9$ well
aware of the fact that with $\bmod\, 8$ periodicity of real Clifford algebras,
we have a complete treatment for any signature. Our goal was to explicitly
verify all our claims with \texttt{CLIFFORD}~\cite{ablamfauser2009}. This has
revealed some corrections necessary to results already published in the
literature. We also want to mention that despite an extensive search, we were
unable to find our transposition map in the literature, contrary to
the immediate feeling that this map `should already have been known'. However,
see its relation to the `*-map' in the twisted group rings given in  
Theorem~\ref{theorem2}. Of course, the transposition map for Euclidean and anti-Euclidean
signatures coincides with, respectively, the reversion and the conjugation
anti-involutions of $\cl_{p,q}$ used in the standard constructions of $\Spin$
and $\Pin$ groups, but for non-definite signatures it differs.

The first paper~\cite{part1} is concerned with a general setup of the
transposition map. This includes a careful choice of sorted bases, with an
admissible order, allowing for further generalizations. Our starting point was
to find the abstract map in abstract real Clifford algebras induced by the
linear algebra map of transposition in the left (right) regular matrix
representations of the real Clifford algebras. We do not reiterate the points
given in the conclusions of~\cite{part1} but we focus here on the general
outline of our work.

The transposition map explains the block structure (grading) of the regular representation matrices. Moreover, we exhibited an explicit decomposition of the transposition map with respect to the graded decomposition of the Clifford algebras $\cl_{p,q} = \cl_{p,0} \hotimes \cl_{0,q}$ for arbitrary signatures. This fact, depending on the symmetry of the employed non singular
polar bilinear form $B_{p,q}$, emerges from the decomposition of the related quadratic
form $Q_{p,q}= Q_{p,0}^\prime \perp Q_{0,q}^{\prime\prime}$ into a positive and a negative definite form. On the technical side we have extended the duality result of Lounesto
\begin{align}
  \langle u \JJ_{B} v , w\rangle &= \langle v, \tilde{u}\w v\rangle
\end{align}
relating the contraction and the Grassmann wedge product, to the general Clifford case
\begin{gather}
\langle u , vw\rangle = \langle \tilde{v}u, v\rangle 
\qquad\textrm{provided ~~~$\det Q_{p,q}\not=0$}\, .
\end{gather}
Choosing a sorted basis $\cb{B}$ which diagonalizes $Q_{p,q}$, we have considered two canonical isomorphisms $V\rightarrow V^*$ identifying the vector space $V$ with its dual~$V^*$. These maps are given by: (i) the dual basis $\cb{B}^*$ defined as $\e_i^*(\e_j)=\delta_{i,j}$ and (ii) the
reciprocal basis $\cb{B}^\flat$ defined as $\e_i^\flat = \dfrac{\e_i}{\varepsilon_i}$. This leads to the universal dual (or \textit{reciprocal}) Clifford algebra $\cl^*_n$. The 
isomorphism $\flat : V \rightarrow V^*$ depends on the quadratic form $Q_{p,q}$ or, equivalently, on its symmetric polar bilinear form $B_{p,q}$. The consequences are 
somewhat straightforward and we gave a few explicit examples.

In general, the regular representation is reducible. Thus, it is much more
interesting to explore how the transposition map is related to the irreducible
spinor representations. This is done in~\cite{part2}. Algebraic spinor spaces
$S$, seen as left $\cl$-modules, are minimal left ideals. These ideals can be
generated by a primitive idempotent $f$ such that $S(f)=\cl_{p,q}f$. As real
Clifford algebras are simple or are direct sums of two simple subalgebras, such
spinor ideals are right modules over $\BK$ which is either a (skew) field or a
double (skew) field. The spinor ideal ${}_{\cl}S(f)_\BK$ is hence a left $\cl$
and right $\BK$-module. The abstract transposition map factors into a
transposition map on the left action by $\cl$ and an (anti) involution on $\BK$.
It was shown, for simple and semisimple algebras, that this involution,
depending on the signature $\bmod \, 8$, amounts to the identity on $\BR$,
complex conjugation on $\BC$, and quaternionic conjugation on $\BH$. An
important role in these constructions is played by the 
vee group $\Gpq{p}{q}$ and the stabilizer group $\Gpqf{p}{q}{f}$ of the primitive
idempotent element $f$ generating the spinor space $S(f)$. We have explicitly
constructed left transversals of cosets which provide bases for the spinor
spaces. A full classification of the stabilizer groups $\Gpqf{p}{q}{f}$ was
given in~\cite{part2} for all signatures $p+q=n\leq 9$, which is complete when
taken $\bmod\, 8$ in both simple and semisimple algebras. These stabilizer
groups do play an important role in quantum computing too. The  latter, however,
usually employs only complex Clifford algebras which posses a much simpler 
$\bmod\, 2$ structure. 

We showed that the transposition map allows one to form a new transposition
scalar product on spinor spaces which we have studied in the present paper.
Only in the Euclidean and anti-Euclidean cases, this scalar product is
identical to the two known spinor scalar products, see for 
example~\cite{lounesto}, and it is different in all other signatures. The
present paper provides a complete classification of the invariance 
groups $\Gpqe{p}{q}$ of the transposition scalar product. 
We also gave a detailed exposition on $\Tpqf{p}{q}{f}$, 
the idempotent group $\Tpqf{p}{q}{f}$ of~$f$, and the field group $\Kpqf{p}{q}{f}$ of~$f$, 
as normal subgroups in the stabilizer group $\Gpqf{p}{q}{f}$ of the primitive idempotent $f$, and on their coset spaces $\Gpq{p}{q}/\Tpqf{p}{q}{f}$, $\Gpqf{p}{q}{f}/\Tpqf{p}{q}{f}$, and
$\Gpq{p}{q}/\Gpqf{p}{q}{f}$. These subgroups allow to construct very
effectively non-canonical transversals and hence basis elements of the spinor
spaces and the (skew double) field underlying the spinor space. This leads
to a description of the situation tabulated for $p+q=n\leq 9$ which is
complete and sufficient due to the $\bmod\, 8$ periodicity. A further aspect
which we have emphasized is the fact that the Clifford algebras can be seen
as twisted group rings $\BR^t[(\BZ_2)^n]$. In particular, we have observed
that our transposition is then mapped to the `star map' in such
rings~\cite{passman}. This observation relates our work to recent approaches to
Clifford algebras using Hopf algebraic methods applied to the twisted group 
ring $\BR^t[(\BZ_2)^n]$~\cite{albuquerquemajid,morier-genoudovsienko}. In
particular, the `star' map of a twisted group ring is actually the antipode map.

Let $\beta$ be an automorphism of $\BK$. The right module $S_{\BK}$ can be
given a new module structure $S_{\cdot\BK}$ by $\psi\cdot \lambda = 
\psi\beta(\lambda)$. Similarly we can twist the left action of $\cl$ seen
as the matrix algebra over $\BK$. As an example, the complex left module
${}_{\BC}M$ can receive an antilinear complex vector space structure
${}_{\BC\cdot}M$, such that $z\cdot m = \overline{z}m$. Seen as
representations, these modules are in general not isomorphic.

As a final remark, observe that the transposition anti-involution connects 
left and right spinor modules ($\tp :{}_{\cl}S_\BK \rightarrow {}_{\BK\cdot} 
S_{\cdot\cl}$) or, equivalently, regular left and right representations, and it 
implements an \emph{isomorphism} between the left and the right modules. 
%
%
Such a map introduces a Frobenius algebra structure~\cite{kadison}, in general 
twisted (or of the second kind), on the Clifford algebra and its respective representations. 
This is expected due to the fact that Clifford algebras can
be seen as twisted group rings. Our work on the stabilizer and invariance
groups shows that only in the case when $\BK \cong \BR$ this identification is
straightforward. In all other cases, the left and the right modules have
different actions of the base field $\BK$ due to the action of the
transposition involution on the base field, which induces the twisting in
the sense of the twisted Frobenius structures~\cite[Ch. 7.1]{kadison}. The
clarifications brought forward in this set of three articles will help to
investigate this fact.



\begin{thebibliography}{00}
\bibitem{part1}
R.~Ab\l amowicz and B.~Fauser, \textit{On the Transposition Anti-Involution in Real Clifford Algebras I: The Transposition Map} (to appear in Linear and Multilinear Algebra)
\bibitem{part2}
R.~Ab\l amowicz and B.~Fauser, \textit{On the Transposition Anti-Involution in Real Clifford Algebras II: Stabilizer Groups of Primitive Idempotents} (to appear in Linear and Multilinear Algebra)
\bibitem{ablamowicz1996}
R.~Ab\l amowicz, \textit{Clifford Algebra Computations with Maple}, in ``Clifford (Geometric) Algebras with Applications in Physics, Mathematics, and Engineering'', W.~E.~Baylis, (Ed.) (Birkh\"{a}user, Boston, 1996) 463--502
\bibitem{ablamowicz1998}
R.~Ab\l amowicz, \textit{Spinor Representations of Clifford algebras: A Symbolic Approach},
Computer Physics Communications Thematic Issue - Computer Algebra in Physics Research \textbf{115}, No. 2--3 (1998) 510--535
\bibitem{ablamfauser2009}
R.~Ab\l amowicz and B.~Fauser, $\mathtt{CLIFFORD}$ for Maple, {\protect \tt http://math.tntech.edu/rafal/} (\copyright 1996--2011)
\bibitem{albuquerquemajid}
H.~Albuquerque and S.~Majid, \textit{Clifford algebras obtained by twisting of group algebras},
J. of Pure and Appl. Algebra \textbf{171}, (2002) 133--148
\bibitem{fultonharris}
W.~Fulton and J.~Harris, \textit{Representation Theory: A First Course}, (Springer, New York, 1991)
\bibitem{hahn}
A.~J.~Hahn, \textit{Quadratic Algebras, Clifford Algebras, and Arithmetic Witt Groups},
(Undergraduate Texts in Mathematics) (Springer-Verlag, New York, 1994) 
\bibitem{helmstetterCliffordGroups}
J.~Helmstetter, \textit{Groupes de Clifford pour de formes quadratiques de rang quelconque},
C.~R.~Acad.~Sci. Paris \textbf{285} (1977) 175--177
\bibitem{helmmicali}
J.~Helmstetter and A.~Micali, \textit{Quadratic Mappings and Clifford Algebra},
(Birkh\"{a}user, Basel, 2008)
\bibitem{kadison}
L.~Kadison,
\textit{New examples of Frobenius extensions},
Univ. Lecture Series Vol. 14, (Amer. Math. Soc. Providence Rhode Island, 1999) 
\bibitem{lounesto}
P.~Lounesto,
\textit{Clifford Algebras and Spinors},
2nd ed. (Cambridge University Press, Cambridge, 2001)
\bibitem{morier-genoudovsienko}
S.~Morier-Genoud and V. Ovsienko,
\textit{Simple graded commutative algebras},
J. of Algebra \textbf{323} (2010) 1649--1664
\bibitem{passman}
D.~S.~Passman, 
\textit{The Algebraic Structure of Group Rings},
(Robert E. Krieger Publishing Company, 1985)
\bibitem{porteous}
I.~R.~Porteous, 
\textit{Clifford Algebras and the Classical Groups},
(Cambridge University Press, Cambridge, 1995)
\bibitem{rotman}
J.J.~Rotman,
\textit{Advanced Modern Algebra},
Revised Printing, (Prentice Hall, Upper Saddle River, 2002)
\bibitem{salingaros1}
N.~Salingaros, 
\textit{Realization, extension, and classification of certain physically
important groups and algebras},
J. Math. Phys. \textbf{22} (1981) 226--232
\bibitem{salingaros2}
N.~Salingaros, 
\textit{On the classification of Clifford algebras and their relation
to spinors in $n$ dimensions},
J. Math. Phys. \textbf{23} (1) (1982) 1--7
\bibitem{salingaros3}
N.~Salingaros,
\textit{The relationship between finite groups and Clifford algebras},
J. Math. Phys. \textbf{25} (1984) 738--742
\bibitem{varlamov} V.~V.~Varlamov, \textit{Universal Coverings of the Orthogonal Groups},
Adv. in Applied Clifford Algebras \textbf{14}, No. 1 (2004) 81--168
%
\end{thebibliography}
\end{document}